\documentclass[11pt]{article}
\usepackage[margin=1in]{geometry}
\usepackage{amsmath,amssymb,amsthm,mathtools}
\usepackage{enumitem,microtype}
\usepackage{authblk}
\usepackage[colorlinks=true,linkcolor=blue,citecolor=blue,urlcolor=blue]{hyperref}
\newtheorem{theorem}{Theorem}[section]
\newtheorem{proposition}[theorem]{Proposition}
\newtheorem{corollary}[theorem]{Corollary}

\theoremstyle{definition}
\newtheorem{definition}[theorem]{Definition}
\newtheorem{example}[theorem]{Example}
\theoremstyle{remark}
\newtheorem{remark}[theorem]{Remark}
\newcommand{\Tr}{\operatorname{Tr}}
\newcommand{\cS}{\mathcal{S}}
\newcommand{\cF}{\mathcal{F}}

\title{Trace-Norm Overlaps of Quantum States: Interpolation, Equality, and Data-Processing Rigidity}
\author[1]{Zahra Maleki Khouzani}
\author[1,2]{Seyed Mahmoud Manjegani\thanks{Corresponding author. Emails:
\href{mailto:Seyed.Manjegani@uregina.ca}{Seyed.Manjegani@uregina.ca};
\href{mailto:manjgani@iut.ac.ir}{manjgani@iut.ac.ir}.}}
\affil[1]{Department of Mathematical Sciences, Isfahan University of
Technology, Isfahan 84156-83111, Iran}
\affil[2]{Department of Mathematics and Statistics, University of Regina,
3737 Wascana Parkway, Regina, Saskatchewan S4S 0A2, Canada}
\date{}

\begin{document}
\maketitle

\begin{abstract}
Let $\rho$ and $\sigma$ be density operators on a separable Hilbert
space. For $0<\alpha<1$, we study the directed trace-norm overlap
\[
\Phi_\alpha(\rho,\sigma)
   =\|\rho^\alpha\sigma^{1-\alpha}\|_1
\]
and its symmetrized form
\[
\mathcal F_\alpha(\rho,\sigma)
   =\frac12\bigl(\Phi_\alpha(\rho,\sigma)
   +\Phi_{1-\alpha}(\rho,\sigma)\bigr).
\]
At $\alpha=\tfrac12$, both quantities coincide with the root
Uhlmann--Jozsa fidelity. Our aim is not to introduce a new notion of
fidelity, but to understand how these overlaps vary with the parameter
and when equality occurs in the resulting inequalities. We first prove
that $\alpha\mapsto\Phi_\alpha(\rho,\sigma)$ is log-convex. This yields
a sharp lower bound for $\mathcal F_\alpha$ in terms of the root
fidelity, together with a stronger intermediate geometric-mean bound.
We also show that $\Phi_\alpha$ is the trace functional
$Q_{\alpha,1/2}$ associated with the $\alpha$-$z$ R\'enyi divergence,
which connects the directed overlap with the known $\alpha$-$z$
R\'enyi theory.

We then determine the exact data-processing behavior of the
symmetrized family. Universal monotonicity under quantum channels holds
only at $\alpha=\tfrac12$; for every other value of $\alpha$, it already
fails under diagonal pinching of faithful real qubit states. In finite
dimensions, we give a complete spectral description of the equality
cases and provide explicit noncommuting examples. We also discuss some
natural questions about overlap-preserving maps. Finally, we prove
that the difference between the Petz overlap and the directed
trace-norm overlap vanishes exactly when $\rho$ and $\sigma$ commute,
without requiring either faithfulness or any additional support
assumption.
\end{abstract}

\noindent\textbf{Keywords.}
Quantum fidelity; trace norm; data processing; complex interpolation;
equality case; quantum R\'enyi divergence.

\medskip
\noindent\textbf{2020 Mathematics Subject Classification.}
Primary 81P45; Secondary 47A63, 15A45.

\section{Introduction}

For density operators $\rho$ and $\sigma$, the root Uhlmann--Jozsa
fidelity is
\[
 F(\rho,\sigma)
 =\|\rho^{1/2}\sigma^{1/2}\|_1
 =\Tr(\rho^{1/2}\sigma\rho^{1/2})^{1/2}.
\]
In some parts of the literature, the square of this quantity is also
called fidelity. Throughout this paper, however, we use the
root-fidelity convention.

The standard fidelity has several well-known properties. It can be
described by means of purifications~\cite{Uhlmann}. More precisely, a
mixed state can be obtained by reducing a pure state on a larger
Hilbert space, and the fidelity is the largest overlap between such
pure-state representations of the two states. The standard fidelity is
also monotone under quantum channels. It is related to the trace
distance by the Fuchs--van de Graaf inequalities~\cite{FuchsVanDeGraaf}
and gives rise to the Bures geometry.

The family studied here is not meant to replace the standard fidelity.
We wish to see whether the midpoint $\alpha=\tfrac12$ has a special
role among the trace-norm overlaps obtained by changing the powers of
the two states. Liang et al.~\cite{Liang} surveyed a related family of
norm-based quantities:

\[
 F_p^{\rm LB}(\rho,\sigma)
 =\frac{\|\rho^{1/2}\sigma^{1/2}\|_p^2}
 {\max\{\|\rho\|_p^2,\|\sigma\|_p^2\}},
 \qquad p\geq1.
\]
In this family the exponent $1/2$ is fixed and the Schatten index
varies. In our family the trace norm is fixed and the powers vary. Thus,
the two families vary different parameters. This leads us to Young and
H\H{o}lder inequalities, interpolation, and equality cases in
noncommutative $L^p$-spaces.

We first compare our quantity with the Petz overlap
\[
 Q_\alpha(\rho,\sigma)
 =\Tr(\rho^\alpha\sigma^{1-\alpha}),
\]
whose minimization gives the non-logarithmic quantum Chernoff quantity.
Since the product need not be positive,
\[
 Q_\alpha(\rho,\sigma)
 \leq\|\rho^\alpha\sigma^{1-\alpha}\|_1.
\]
Equality holds when the states commute. We prove that the converse is
also true, with no faithfulness or support assumptions.

The directed functional itself is already known in the theory of
quantum R\'enyi divergences. As we show in
Section~\ref{sec:alpha-z}, it is the $z=1/2$ trace functional in the
$\alpha$-$z$ family of Audenaert and Datta~\cite{AudenaertDatta}.
Log-convexity of related matrix norm functions is also known in the
general theory of unitarily invariant norms; see
Sababheh~\cite{Sababheh}. We therefore do not claim that
$\Phi_\alpha$ itself is new. Our main results concern its
symmetrization, its equality cases, and its behavior under quantum
channels.

Against this background, the main results of the paper may be
summarized as follows:
\begin{enumerate}[label=\textup{(\roman*)}]
\item log-convexity of $\alpha\mapsto\Phi_\alpha(\rho,\sigma)$;
\item a sharp comparison of $\cF_\alpha$ with the midpoint
      $F=\cF_{1/2}$;
\item complete finite-dimensional equality and constancy classifications,
      including noncommuting states and unequal overlapping supports;
\item an exact commutativity criterion for equality between the Petz and
      directed trace-norm overlaps;
\item the sharp universal data-processing range for the arithmetic
      symmetrization, together with faithful real-qubit pinching
      counterexamples away from the midpoint; and
\item comparison with the general von Neumann algebraic
      $\alpha$-$z$ theory.
\end{enumerate}

We now explain which part of the data-processing problem is not covered
by the known results. Since
\[
\Phi_\alpha(\rho,\sigma)=Q_{\alpha,1/2}(\rho,\sigma),
\]
the known results for the $\alpha$-$z$ R\'enyi functional determine the
range of parameters for which the directed overlap satisfies the
data-processing inequality (DPI). However, these results do not
determine whether the symmetrized quantity
\[
\mathcal F_\alpha(\rho,\sigma)
 =\frac12\bigl(\Phi_\alpha(\rho,\sigma)
 +\Phi_{1-\alpha}(\rho,\sigma)\bigr)
\]
has the same property. We prove that $\mathcal F_\alpha$ is monotone
under all quantum channels only when
$\alpha=\tfrac12$. To the best of our knowledge, this characterization
and the spectral condition for equality given below have not appeared
in the literature. Here the new results concern the symmetrized
functional and its equality cases. The functional $Q_{\alpha,1/2}$ and
its basic interpolation properties are already known.

The paper is organized as follows. We begin with some basic properties
of the directed and symmetrized overlaps, including the case of
commuting states. We then relate the directed overlap to the
$\alpha$-$z$ R\'enyi functional $Q_{\alpha,1/2}$ and prove its
log-convexity. This result is used to obtain sharp midpoint bounds and
to characterize the corresponding equality and constancy cases. We
next study the difference between the Petz overlap and the directed
trace-norm overlap. In the final main section, we determine exactly
when the symmetrized overlap is monotone under all quantum channels.
We end with some remarks on overlap-preserving maps and possible
extensions to semifinite von Neumann algebras.

\section{Definition and verified properties}

Let $H$ be a complex separable Hilbert space, and let $\Tr$ denote the
usual trace. We write $\mathcal S_1(H)$ for the trace class and set
\[
\mathcal S(H)
  =\{\rho\in\mathcal S_1(H):
      \rho\geq0,\ \Tr\rho=1\}.
\]      

\begin{definition}
For $\rho,\sigma\in\cS(H)$ and $0<\alpha<1$, define
\[
 \Phi_\alpha(\rho,\sigma)
 :=\|\rho^\alpha\sigma^{1-\alpha}\|_1
\]
and
\[
 \cF_\alpha(\rho,\sigma)
 :=\frac12\left(
 \Phi_\alpha(\rho,\sigma)
 +\Phi_{1-\alpha}(\rho,\sigma)\right).
\]
\end{definition}

\begin{proposition}\label{prop:basic-properties}
Let $\rho,\sigma\in\mathcal S(H)$ and $0<\alpha<1$. Then:
\begin{enumerate}[label=\textup{(\alph*)}]
\item
$
0\leq\Phi_\alpha(\rho,\sigma)\leq1
\quad\text{and}\quad
0\leq\mathcal F_\alpha(\rho,\sigma)\leq1;
$
\item
$
\mathcal F_\alpha(\rho,\sigma)
   =\mathcal F_\alpha(\sigma,\rho);
$
\item
$
\mathcal F_\alpha(\rho,\sigma)
   =\mathcal F_{1-\alpha}(\rho,\sigma);
$
\item
$
\mathcal F_{1/2}(\rho,\sigma)=F(\rho,\sigma);
$
\item $\mathcal F_\alpha$ is invariant under simultaneous unitary
conjugation;

\item
$
\mathcal F_\alpha(\rho,\sigma)=0
\quad\Longleftrightarrow\quad
\rho\sigma=0.
$

\end{enumerate}
\end{proposition}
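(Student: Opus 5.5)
We take the six items in order; each reduces to a standard fact about Schatten norms or the functional calculus.

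For (a), we use Hölder's inequality in Schatten classes with exponents $p=1/\alpha$ and $q=1/(1-\alpha)$. Since $\|\rho^\alpha\|_{1/\alpha}=(\Tr\rho)^\alpha=1$ and $\|\sigma^{1-\alpha}\|_{1/(1-\alpha)}=(\Tr\sigma)^{1-\alpha}=1$, we get
\[
\Phi_\alpha(\rho,\sigma)\le\|\rho^\alpha\|_{1/\alpha}\,\|\sigma^{1-\alpha}\|_{1/(1-\alpha)}=1 .
\]
This also shows that $\rho^\alpha\sigma^{1-\alpha}$ is trace class. Averaging the bounds for $\Phi_\alpha$ and $\Phi_{1-\alpha}$ gives the bound for $\cF_\alpha$. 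Nonnegativity is immediate.

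For (b), the key identity is $\|X^*\|_1=\|X\|_1$. Applying it to $X=\rho^\alpha\sigma^{1-\alpha}$ gives
\[
\Phi_\alpha(\rho,\sigma)=\|\sigma^{1-\alpha}\rho^\alpha\|_1=\Phi_{1-\alpha}(\sigma,\rho).
\]
Swapping $\rho$ and $\sigma$ in $\cF_\alpha$ therefore exchanges its two summands. Item (c) is immediate from the definition, since replacing $\alpha$ by $1-\alpha$ simply swaps the two terms. Item (d) holds because both terms equal $\|\rho^{1/2}\sigma^{1/2}\|_1=F(\rho,\sigma)$. For (e), functional calculus gives $(U\rho U^*)^\alpha=U\rho^\alpha U^*$, and the trace norm is unitarily invariant.

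For (f), the only point needing care is to translate vanishing of the overlaps into a statement about supports. Let $P$ and $Q$ be the support projections of $\rho$ and $\sigma$. Since both summands of $\cF_\alpha$ are nonnegative, $\cF_\alpha(\rho,\sigma)=0$ holds iff $\rho^\alpha\sigma^{1-\alpha}=0$ and $\rho^{1-\alpha}\sigma^\alpha=0$. For any $\beta,\gamma>0$ we claim
\[
\rho^\beta\sigma^\gamma=0 \iff PQ=0 \iff \rho\sigma=0 .
\]
The reasoning is as follows. The operators $\rho^\beta$ and $\rho$ have the same kernel, namely $\ker P$, and the range of $\sigma^\gamma$ has closure $\operatorname{ran}Q$. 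So $\rho^\beta\sigma^\gamma=0$ iff $\rho^\beta$ vanishes on $\operatorname{ran}\sigma^\gamma$, hence on its closure $\operatorname{ran}Q$, which is equivalent to $PQ=0$. The same argument with $\beta=\gamma=1$ gives $PQ=0\iff\rho\sigma=0$.

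The main obstacle is modest: in infinite dimensions one must use closures of ranges, since ranges need not be closed, and one must confirm trace-class membership, which Hölder already provides.
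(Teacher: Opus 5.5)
Your proposal is correct and follows essentially the same route as the paper: H\"older with exponents $1/\alpha$ and $1/(1-\alpha)$ for (a), the identity $\|X^*\|_1=\|X\|_1$ for (b), and the kernel/closure-of-range argument with support projections for (f). Packaging (f) as a single equivalence $\rho^\beta\sigma^\gamma=0\iff s(\rho)s(\sigma)=0$ for all $\beta,\gamma>0$, then specializing to $\beta=\gamma=1$, is slightly tidier than the paper's two-directional argument, but the content is the same.
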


\begin{proof}
Since $\rho$ and $\sigma$ are positive trace-class operators,
$\rho^\alpha\in\mathcal S_{1/\alpha}(H)$ and
$\sigma^{1-\alpha}\in\mathcal S_{1/(1-\alpha)}(H)$. The
noncommutative H\"older inequality therefore gives
\[
\begin{aligned}
\Phi_\alpha(\rho,\sigma)
 &=\|\rho^\alpha\sigma^{1-\alpha}\|_1\\
 &\leq
   \|\rho^\alpha\|_{1/\alpha}
   \|\sigma^{1-\alpha}\|_{1/(1-\alpha)}.
\end{aligned}
\]
Moreover,
\[
\|\rho^\alpha\|_{1/\alpha}
 =\bigl(\Tr(\rho^\alpha)^{1/\alpha}\bigr)^\alpha
 =(\Tr\rho)^\alpha=1,
\]
and similarly
\[
\|\sigma^{1-\alpha}\|_{1/(1-\alpha)}
 =(\Tr\sigma)^{1-\alpha}=1.
\]
Thus
\[
0\leq\Phi_\alpha(\rho,\sigma)\leq1.
\]
Applying the same argument with $\alpha$ replaced by $1-\alpha$ and
taking the arithmetic mean gives
\[
0\leq\mathcal F_\alpha(\rho,\sigma)\leq1.
\]
Next, using $\|x^*\|_1=\|x\|_1$, we obtain
\[
\begin{aligned}
\Phi_\alpha(\sigma,\rho)
 &=\|\sigma^\alpha\rho^{1-\alpha}\|_1\\
 &=\|(\sigma^\alpha\rho^{1-\alpha})^*\|_1\\
 &=\|\rho^{1-\alpha}\sigma^\alpha\|_1\\
 &=\Phi_{1-\alpha}(\rho,\sigma).
\end{aligned}
\]
Consequently,
\[
\begin{aligned}
\mathcal F_\alpha(\sigma,\rho)
 &=\frac12\bigl(
     \Phi_\alpha(\sigma,\rho)
     +\Phi_{1-\alpha}(\sigma,\rho)\bigr)\\
 &=\frac12\bigl(
     \Phi_{1-\alpha}(\rho,\sigma)
     +\Phi_\alpha(\rho,\sigma)\bigr)\\
 &=\mathcal F_\alpha(\rho,\sigma).
\end{aligned}
\]
This proves \textup{(b)}. Part \textup{(c)} follows directly from the
definition, since interchanging $\alpha$ and $1-\alpha$ merely
interchanges the two terms in the arithmetic mean.

At $\alpha=\tfrac12$, the two directed terms coincide, and hence
\[
\begin{aligned}
\mathcal F_{1/2}(\rho,\sigma)
 &=\|\rho^{1/2}\sigma^{1/2}\|_1\\
 &=F(\rho,\sigma),
\end{aligned}
\]
where $F$ denotes the root Uhlmann--Jozsa fidelity. This proves
\textup{(d)}.

Let $U$ be a unitary operator on $H$. By the functional calculus,
\[
(U\rho U^*)^\alpha=U\rho^\alpha U^*,
\qquad
(U\sigma U^*)^{1-\alpha}
   =U\sigma^{1-\alpha}U^*.
\]
Therefore,
\[
\begin{aligned}
\Phi_\alpha(U\rho U^*,U\sigma U^*)
 &=\|U\rho^\alpha\sigma^{1-\alpha}U^*\|_1\\
 &=\|\rho^\alpha\sigma^{1-\alpha}\|_1\\
 &=\Phi_\alpha(\rho,\sigma),
\end{aligned}
\]
where we used the unitary invariance of the trace norm. The same
identity holds with $\alpha$ replaced by $1-\alpha$, and thus proves
\textup{(e)}.

It remains to prove \textup{(f)}. Since both terms in the definition of
$\mathcal F_\alpha$ are nonnegative,
$
\mathcal F_\alpha(\rho,\sigma)=0
$
if and only if
$
\rho^\alpha\sigma^{1-\alpha}=0
\quad\text{and}\quad
\rho^{1-\alpha}\sigma^\alpha=0.
$
In fact, either one of these equalities is already sufficient to show
that the supports of $\rho$ and $\sigma$ are orthogonal.
To see this, suppose that
$
\rho^\alpha\sigma^{1-\alpha}=0.
$
Then
$
\operatorname{Ran}(\sigma^{1-\alpha})
   \subseteq\ker(\rho^\alpha).
$
Positive powers do not change the kernel or the support of a positive
operator. Hence
$
\ker(\rho^\alpha)=\ker\rho
$
and
$
\overline{\operatorname{Ran}(\sigma^{1-\alpha})}
   =s(\sigma)H,
$
where $s(\sigma)$ denotes the support projection of $\sigma$. Since
$\ker\rho=(I-s(\rho))H$ is closed, it follows that
$
s(\sigma)H\subseteq(I-s(\rho))H.
$
Therefore,
$
s(\rho)s(\sigma)=0.
$
As $\rho=\rho s(\rho)$ and $\sigma=s(\sigma)\sigma$, we conclude that
$
\rho\sigma
 =\rho\,s(\rho)s(\sigma)\,\sigma
 =0.
$

Conversely, suppose that $\rho\sigma=0$. Since $\rho$ and $\sigma$ are
positive, this is equivalent to the orthogonality of their support
projections:
$
s(\rho)s(\sigma)=0.
$
The operators $\rho^\alpha$ and $\rho$ have the same support, as do
$\sigma^{1-\alpha}$ and $\sigma$. Consequently,
\[
\rho^\alpha\sigma^{1-\alpha}=0
\quad\text{and}\quad
\rho^{1-\alpha}\sigma^\alpha=0.
\]
Thus both directed overlaps vanish, and therefore
$
\mathcal F_\alpha(\rho,\sigma)=0.
$
This completes the proof.
\end{proof}

\begin{proposition}
Let $\rho=|\xi\rangle\langle\xi|$ and
$\sigma=|\eta\rangle\langle\eta|$
be pure states, where $\xi,\eta\in H$ are unit vectors. Then, for every
$0<\alpha<1$,
\[
\mathcal{F}_{\alpha}(\rho,\sigma)
   =|\langle\xi,\eta\rangle|
   =F(\rho,\sigma).
\]
\end{proposition}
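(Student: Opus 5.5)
The plan is to compute each directed overlap explicitly, using the fact that a rank-one projection is idempotent under every positive power. Since $\rho=|\xi\rangle\langle\xi|$ with $\|\xi\|=1$ is an orthogonal projection, the functional calculus gives $\rho^{t}=\rho$ for every $t>0$. Indeed, the spectrum of $\rho$ is contained in $\{0,1\}$, and $\lambda^{t}=\lambda$ on $\{0,1\}$. The same holds for $\sigma$. Hence, for every $0<\alpha<1$,
\[
\rho^\alpha\sigma^{1-\alpha}=\rho\sigma
 =|\xi\rangle\langle\xi|\eta\rangle\langle\eta|
 =\langle\xi,\eta\rangle\,|\xi\rangle\langle\eta|,
\]
and likewise $\rho^{1-\alpha}\sigma^{\alpha}=\rho\sigma$. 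Both directed overlaps therefore reduce to the trace norm of the same rank-one operator.

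Next I compute that trace norm. For unit vectors, the operator $x=|\xi\rangle\langle\eta|$ satisfies $x^*x=|\eta\rangle\langle\xi|\xi\rangle\langle\eta|=|\eta\rangle\langle\eta|$, so $|x|=|\eta\rangle\langle\eta|$ and $\|x\|_1=1$. It follows that
\[
\Phi_\alpha(\rho,\sigma)=\Phi_{1-\alpha}(\rho,\sigma)=|\langle\xi,\eta\rangle|,
\]
and taking the arithmetic mean gives $\mathcal F_\alpha(\rho,\sigma)=|\langle\xi,\eta\rangle|$. Specializing to $\alpha=\tfrac12$ and invoking Proposition~\ref{prop:basic-properties}(d) yields $F(\rho,\sigma)=\mathcal F_{1/2}(\rho,\sigma)=|\langle\xi,\eta\rangle|$. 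This is also consistent with the standard formula $\Tr(\rho^{1/2}\sigma\rho^{1/2})^{1/2}=(|\langle\xi,\eta\rangle|^2)^{1/2}$.

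There is no serious obstacle here. The only point needing care is the claim $\rho^{t}=\rho$: it relies on the convention $0^{t}=0$ for $t>0$, which is consistent with the support discussion in Proposition~\ref{prop:basic-properties}(f). The degenerate case $\langle\xi,\eta\rangle=0$ needs no separate treatment, since the formulas simply give $0$ there, in agreement with part (f).
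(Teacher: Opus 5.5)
Your proposal is correct and follows essentially the same route as the paper: use $\rho^t=\rho$ and $\sigma^t=\sigma$ to reduce both directed overlaps to $\|\rho\sigma\|_1=|\langle\xi,\eta\rangle|\,\bigl\||\xi\rangle\langle\eta|\bigr\|_1=|\langle\xi,\eta\rangle|$. The only cosmetic differences are that you read off the trace norm from $|x|=|\eta\rangle\langle\eta|$ instead of the single singular value of $T^*T$, and you obtain $F$ via part (d) rather than recomputing $\|\rho^{1/2}\sigma^{1/2}\|_1$ directly.
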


\begin{proof}
Since $\rho$ and $\sigma$ are rank-one projections, their
spectra are contained in $\{0,1\}$. It follows from the functional
calculus that
$\rho^t=\rho$ and $\sigma^t=\sigma$
for every $t>0$. Consequently,
\[
\rho^\alpha\sigma^{1-\alpha}=\rho\sigma, \quad \mbox{and}\quad
\rho\sigma =\langle\xi,\eta\rangle|\xi\rangle\langle\eta|.
\]
To compute its trace norm, set
$
T=|\xi\rangle\langle\eta|.
$
Then $T^*=|\eta\rangle\langle\xi|$ and hence
$
T^*T
 =\|\xi\|^2|\eta\rangle\langle\eta|.
$
The operator $T^*T$ vanishes on $\eta^\perp$, while
$
T^*T\eta
 =\|\xi\|^2\|\eta\|^2\eta.
$
Thus $T^*T$ has only one nonzero eigenvalue, namely
$\|\xi\|^2\|\eta\|^2$. Therefore, $T$ has only one nonzero
singular value, equal to
$
\|\xi\|\,\|\eta\|.
$
Since $\xi$ and $\eta$ are unit vectors, this singular value is equal
to $1$, and consequently
$
\bigl\||\xi\rangle\langle\eta|\bigr\|_1=1.
$
Therefore,
$
\|\rho\sigma\|_1
 =|\langle\xi,\eta\rangle|.
$
It follows that
\[
\Phi_\alpha(\rho,\sigma)
 =\|\rho^\alpha\sigma^{1-\alpha}\|_1
 =|\langle\xi,\eta\rangle|.
\]
Similarly,
\[
\Phi_{1-\alpha}(\rho,\sigma)
 =\|\rho^{1-\alpha}\sigma^\alpha\|_1
 =|\langle\xi,\eta\rangle|.
\]
Hence,
\[
\begin{aligned}
\mathcal{F}_\alpha(\rho,\sigma)
 &=\frac{1}{2}
   \left(
     \Phi_\alpha(\rho,\sigma)
     +\Phi_{1-\alpha}(\rho,\sigma)
   \right) \\
 &=|\langle\xi,\eta\rangle|.
\end{aligned}
\]
Finally, since $\rho^{1/2}=\rho$ and $\sigma^{1/2}=\sigma$, the root
Uhlmann--Jozsa fidelity satisfies
\[
F(\rho,\sigma)
 =\|\rho^{1/2}\sigma^{1/2}\|_1
 =\|\rho\sigma\|_1
 =|\langle\xi,\eta\rangle|.
\]
This completes the proof.
\end{proof}

\begin{remark}
If only $\rho=|\xi\rangle\langle\xi|$ is pure, then
\[
 \Phi_\alpha(\rho,\sigma)
 =\|\sigma^{1-\alpha}\xi\|,
 \qquad
 \Phi_{1-\alpha}(\rho,\sigma)
 =\|\sigma^\alpha\xi\|.
\]
These quantities generally depend on $\alpha$ and do not equal the usual
pure--mixed fidelity. Hence the functional should initially be called a
parameterized trace-norm overlap, not automatically a quantum fidelity
satisfying all of Jozsa's axioms.
\end{remark}


If $\rho$ and $\sigma$ commute, write
$
 \rho=\operatorname{diag}(p_j),\qquad
 \sigma=\operatorname{diag}(q_j).
$
Then
\[
 \Phi_\alpha=\sum_jp_j^\alpha q_j^{1-\alpha},
 \qquad
 \cF_\alpha=\frac12\sum_j
 \left(p_j^\alpha q_j^{1-\alpha}
 +p_j^{1-\alpha}q_j^\alpha\right).
\]

\begin{theorem}
If $\rho$ and $\sigma$ commute, then
\[
 F(\rho,\sigma)=\cF_{1/2}(\rho,\sigma)
 \leq\cF_\alpha(\rho,\sigma)\leq1.
\]
For $\alpha\neq\tfrac12$, equality on the left holds precisely when
$p_j=q_j$ for every $j$ in the common support.
\end{theorem}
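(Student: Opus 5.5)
Since $\rho$ and $\sigma$ commute, we work in a common eigenbasis and use the diagonal formulas displayed just before the statement. This reduces the whole theorem to scalar inequalities, one for each index $j$.

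The left inequality comes from AM--GM applied term by term. For each $j$ with $p_j,q_j\ge0$ we have
\[
\tfrac12\bigl(p_j^\alpha q_j^{1-\alpha}+p_j^{1-\alpha}q_j^\alpha\bigr)
\;\ge\;\bigl(p_j^\alpha q_j^{1-\alpha}\,p_j^{1-\alpha}q_j^\alpha\bigr)^{1/2}
=\sqrt{p_jq_j}.
\]
Summing over $j$ gives $\cF_\alpha(\rho,\sigma)\ge\sum_j\sqrt{p_jq_j}=\cF_{1/2}(\rho,\sigma)$. By Proposition~\ref{prop:basic-properties}(d), the right-hand side equals $F(\rho,\sigma)$. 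The upper bound $\cF_\alpha\le1$ is exactly Proposition~\ref{prop:basic-properties}(a).

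For the equality case, note that indices outside the common support contribute zero to both sides. Indeed, if $p_j=0$ or $q_j=0$, every term vanishes, because $0<\alpha<1$ makes both exponents positive. Since the termwise inequalities hold separately, equality of the sums forces equality in each term $j$ with $p_j,q_j>0$. AM--GM is an equality exactly when the two means agree:
\[
p_j^\alpha q_j^{1-\alpha}=p_j^{1-\alpha}q_j^\alpha
\iff (p_j/q_j)^{2\alpha-1}=1 .
\]
When $\alpha\ne\tfrac12$ the exponent $2\alpha-1$ is nonzero, so this forces $p_j=q_j$. Conversely, if $p_j=q_j$ on the common support, each term equals $p_j=\sqrt{p_jq_j}$, and equality holds.

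The only point needing care is the reduction to a common eigenbasis when $H$ is infinite-dimensional. Commuting compact self-adjoint operators are simultaneously diagonalizable, with an orthonormal basis including a basis of the joint kernel, so the diagonal formulas apply with countable sums. These sums converge absolutely by Cauchy--Schwarz/H\"older, since $\sum_j p_j=\sum_j q_j=1$. The equality argument is otherwise elementary and should present no real obstacle.
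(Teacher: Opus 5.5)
Your proposal is correct and follows essentially the same route as the paper: diagonalize in a common eigenbasis, apply the scalar AM--GM inequality termwise and sum, take the upper bound from Proposition~\ref{prop:basic-properties}(a), and settle equality index by index via $(p_j/q_j)^{2\alpha-1}=1$ on the common support. Your remarks on simultaneous diagonalization in the separable case and on absolute convergence of the sums spell out details the paper leaves implicit.
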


\begin{proof}
Since $\rho$ and $\sigma$ are commuting positive compact operators,
they admit a common orthonormal eigenbasis. Thus, we may write
\[
\rho=\sum_j p_j |e_j\rangle\langle e_j|,
\qquad
\sigma=\sum_j q_j |e_j\rangle\langle e_j|,
\]
where
\[
p_j,q_j\geq0,
\qquad
\sum_jp_j=\sum_jq_j=1.
\]
In this representation, the common support corresponds to those
indices $j$ for which $p_jq_j>0$.
By the functional calculus,
\[
\rho^\alpha\sigma^{1-\alpha}
 =\sum_jp_j^\alpha q_j^{1-\alpha}
   |e_j\rangle\langle e_j|.
\]
Since this operator is positive, its trace norm is equal to its trace.
Therefore,
\[
\Phi_\alpha(\rho,\sigma)
 =\sum_jp_j^\alpha q_j^{1-\alpha}.
\]
Similarly,
\[
\Phi_{1-\alpha}(\rho,\sigma)
 =\sum_jp_j^{1-\alpha}q_j^\alpha.
\]
It follows that
\[
\mathcal{F}_\alpha(\rho,\sigma)
 =\frac12\sum_j
 \left(
 p_j^\alpha q_j^{1-\alpha}
 +p_j^{1-\alpha}q_j^\alpha
 \right).
\]

For each $j$, the arithmetic--geometric mean inequality gives
\[
\begin{aligned}
\frac12\left(
p_j^\alpha q_j^{1-\alpha}
+p_j^{1-\alpha}q_j^\alpha
\right)
&\geq
\sqrt{
p_j^\alpha q_j^{1-\alpha}
p_j^{1-\alpha}q_j^\alpha
}\\
&=\sqrt{p_jq_j}.
\end{aligned}
\]
Summing over $j$, we obtain
\[
\mathcal{F}_\alpha(\rho,\sigma)
 \geq\sum_j\sqrt{p_jq_j}.
\]
On the other hand, since $\rho$ and $\sigma$ commute,
\[
F(\rho,\sigma)
 =\|\rho^{1/2}\sigma^{1/2}\|_1
 =\sum_j\sqrt{p_jq_j}.
\]
Hence,
\[
F(\rho,\sigma)
 =\mathcal{F}_{1/2}(\rho,\sigma)
 \leq\mathcal{F}_\alpha(\rho,\sigma).
\]
The upper bound
$
\mathcal{F}_\alpha(\rho,\sigma)\leq1
$
follows directly from Proposition~\ref{prop:basic-properties}.

We finally consider equality in the lower bound. Since each term in
the preceding arithmetic--geometric mean inequality is nonnegative,
equality after summation holds if and only if equality holds for every
$j$. If $p_jq_j=0$, both sides of the corresponding inequality are
zero. If $p_jq_j>0$, equality holds if and only if
\[
p_j^\alpha q_j^{1-\alpha}
 =p_j^{1-\alpha}q_j^\alpha,
\]
which is equivalent to
\[
\left(\frac{p_j}{q_j}\right)^{2\alpha-1}=1.
\]
For $\alpha\neq\tfrac12$, this occurs precisely when
\[
p_j=q_j.
\]
Thus equality in the lower bound holds exactly when $p_j=q_j$ for
every $j$ in the common support.
\end{proof}

\section{Identification with the \texorpdfstring{$\alpha$-$z$}{alpha-z}
R\'enyi family}
\label{sec:alpha-z}

For positive matrices and parameters $\beta,z>0$, the trace functional
underlying the $\alpha$-$z$ R\'enyi divergence is
\[
 Q_{\beta,z}(\rho\Vert\sigma)
 :=\Tr\left(
 \sigma^{(1-\beta)/(2z)}
 \rho^{\beta/z}
 \sigma^{(1-\beta)/(2z)}
 \right)^z.
\]
For normalized states,
\[
 D_{\beta,z}(\rho\Vert\sigma)
 =\frac{1}{\beta-1}
   \log Q_{\beta,z}(\rho\Vert\sigma),
 \qquad \beta\ne1.
\]
This family was introduced and systematically studied by Audenaert and
Datta~\cite{AudenaertDatta}; its von Neumann algebraic theory has recently
been developed by Hiai and Jen\v{c}ov\'a~\cite{HiaiJencova}.

\begin{proposition}\label{prop:alpha-z}
For $0<\alpha<1$,
\[
 \Phi_\alpha(\rho,\sigma)
 =Q_{\alpha,1/2}(\rho\Vert\sigma),
\]
and therefore
\[
 D_{\alpha,1/2}(\rho\Vert\sigma)
 =\frac{1}{\alpha-1}\log\Phi_\alpha(\rho,\sigma).
\]
At $\alpha=1/2$, this reduces to
\[
 D_{1/2,1/2}(\rho\Vert\sigma)=-2\log F(\rho,
 \sigma).
\]
\end{proposition}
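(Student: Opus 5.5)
The plan is to substitute $z=1/2$ into the definition of $Q_{\beta,z}$ with $\beta=\alpha$ and then identify the resulting trace with a trace norm. With $z=1/2$ the exponents become $(1-\alpha)/(2z)=1-\alpha$ and $\beta/z=2\alpha$. Hence
\[
Q_{\alpha,1/2}(\rho\Vert\sigma)=\Tr\bigl(\sigma^{1-\alpha}\rho^{2\alpha}\sigma^{1-\alpha}\bigr)^{1/2}.
\]
Set $X=\rho^\alpha\sigma^{1-\alpha}$, which is trace class by the H\"older argument in Proposition~\ref{prop:basic-properties}. Because $\rho^\alpha$ and $\sigma^{1-\alpha}$ are self-adjoint, we get $X^*X=\sigma^{1-\alpha}\rho^{2\alpha}\sigma^{1-\alpha}$. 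So the operator inside the trace is $|X|^2$, and its square root is $|X|=(X^*X)^{1/2}$. Therefore
\[
Q_{\alpha,1/2}(\rho\Vert\sigma)=\Tr|X|=\|X\|_1=\Phi_\alpha(\rho,\sigma),
\]
which is the first identity.

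The only point that needs care is the infinite-dimensional setting. Although $\rho^{2\alpha}$ need not be trace class, $X^*X$ is a positive operator whose square root $|X|$ is trace class. This is exactly the meaning of $X\in\mathcal S_1(H)$, so the expression $\Tr(\cdot)^{1/2}$ is well defined and finite. One should also note that $(X^*X)^{1/2}$ is the functional-calculus square root, consistent with the convention that $(\cdot)^z$ in $Q_{\beta,z}$ is applied to a positive operator. I do not expect any real obstacle beyond keeping these conventions straight.

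The second statement follows by substituting $Q_{\alpha,1/2}=\Phi_\alpha$ into the definition of $D_{\beta,z}$ with $\beta=\alpha\neq1$. The logarithm takes the value $-\infty$ when $\Phi_\alpha=0$, that is, when $\rho\sigma=0$ by Proposition~\ref{prop:basic-properties}(f), and the identity is read in the extended sense there. For $\alpha=1/2$ the prefactor $1/(\alpha-1)$ equals $-2$, and $\Phi_{1/2}(\rho,\sigma)=\|\rho^{1/2}\sigma^{1/2}\|_1=F(\rho,\sigma)$ by Proposition~\ref{prop:basic-properties}(d). Together these give $D_{1/2,1/2}(\rho\Vert\sigma)=-2\log F(\rho,\sigma)$.
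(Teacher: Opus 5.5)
Your proposal is correct and is essentially the paper's own argument: you write $\sigma^{1-\alpha}\rho^{2\alpha}\sigma^{1-\alpha}=|\rho^\alpha\sigma^{1-\alpha}|^2$, so that $Q_{\alpha,1/2}(\rho\Vert\sigma)=\Tr|X|=\Phi_\alpha(\rho,\sigma)$, and then read the divergence identities off the definition. Your extra remarks on the meaning of the trace in infinite dimensions and on reading the identity in the extended sense when $\Phi_\alpha=0$ are harmless refinements that the paper leaves implicit.
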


\begin{proof}
Since
\[
 |\rho^\alpha\sigma^{1-\alpha}|^2
 =\sigma^{1-\alpha}\rho^{2\alpha}
  \sigma^{1-\alpha},
\]
we have
\[
 \Phi_\alpha(\rho,\sigma)
 =\Tr\left(
 \sigma^{1-\alpha}\rho^{2\alpha}
 \sigma^{1-\alpha}
 \right)^{1/2},
\]
which is exactly $Q_{\alpha,1/2}(\rho\Vert\sigma)$.
The remaining assertions follow from the definition of
$D_{\alpha,z}$.
\end{proof}

\begin{remark}
By Proposition~\ref{prop:alpha-z}, the definition, tensor-product
multiplicativity, and data processing for the directed quantity belong
to the known $\alpha$-$z$ theory. Also, log-convexity of matrix norm
functions of the form $t\mapsto\|A^tXB^{1-t}\|$ was proved for
unitarily invariant norms in~\cite{Sababheh}. We include
Theorem~\ref{thm:logconvex} because its proof is short, applies directly
to density operators, and extends explicitly to separable Hilbert
spaces. Our new results concern the symmetrized quantity. They include
the sharp DPI range in Theorem~\ref{thm:dpi-counterexample} and the
finite-dimensional spectral-edge description in
Theorem~\ref{thm:spectral-equality}. To the best of our knowledge, these
two results do not appear in the existing $\alpha$-$z$ or
matrix-interpolation literature.
\end{remark}

\section{Log-convexity and the midpoint inequality}

The following interpolation theorem gives the main inequality for
noncommuting density operators. No faithfulness assumption is needed,
so the states may have nonzero kernels.

\begin{theorem}
\label{thm:logconvex}
Let $\rho,\sigma\in\mathcal{S}(H)$. Suppose that $0<a<b<1,\quad 0<\theta<1$,
and set $t=(1-\theta)a+\theta b$. Then
$
\Phi_t(\rho,\sigma)
\leq
\Phi_a(\rho,\sigma)^{1-\theta}
\Phi_b(\rho,\sigma)^\theta.
$
Consequently, the function $t\longmapsto\Phi_t(\rho,\sigma)$ is log-convex on $(0,1)$.
\end{theorem}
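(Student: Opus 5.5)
The natural route is complex interpolation via the Hadamard three-lines theorem, applied to an analytic operator-valued function dual to the trace norm. First reduce to the case where $\Phi_a$ and $\Phi_b$ are positive. If either vanishes, then by the argument in Proposition~\ref{prop:basic-properties}(f) the supports of $\rho$ and $\sigma$ are orthogonal. In that case $\Phi_t=0$ as well, and the inequality is trivial. So assume $\Phi_a,\Phi_b>0$.

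Use the duality $\|X\|_1=\sup\{|\Tr(XY)|:\|Y\|_\infty\le1\}$. It suffices to take $Y$ of finite rank, since these are weak$^*$-dense and approximate the polar part. Fix such a contraction $Y$, and work on the strip $S=\{z:0\le\operatorname{Re}z\le1\}$. Define
\[
f(z)=\Tr\bigl(\rho^{w(z)}\sigma^{1-w(z)}Y\bigr),\qquad w(z)=(1-z)a+zb,
\]
with complex powers taken by functional calculus on the supports, i.e.\ $\rho^{w}=\sum_j p_j^{w}|e_j\rangle\langle e_j|$ over $p_j>0$, and similarly for $\sigma$.

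Since $\operatorname{Re}w(z)\in[a,b]\subset(0,1)$, we have
\[
\rho^{w}=\rho^{\operatorname{Re}w}\,\rho^{i\operatorname{Im}w},
\]
where the second factor is a partial isometry on the support of $\rho$; the same holds for $\sigma^{1-w}$. Hence $\rho^w\sigma^{1-w}$ lies in the trace class, and $f$ is bounded on $S$: by Hölder, $|f|\le1$ using the normalization from Proposition~\ref{prop:basic-properties}.

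Analyticity in the interior is the point needing care in infinite dimensions, and I expect it to be the main technical obstacle. I would handle it by truncating to finitely many eigenvectors. Let $P_N,Q_N$ be spectral projections of $\rho,\sigma$ onto their top $N$ eigenvalues, and replace $\rho,\sigma$ by $\rho P_N,\sigma Q_N$. The truncated function is a finite sum of exponentials $p_j^{w}q_k^{1-w}c_{jk}$, hence entire and bounded on $S$. Since the tail sums $\sum_{j>N}p_j^{a}$-type terms go to zero, $f_N\to f$ uniformly on $S$ by Hölder estimates in $\mathcal S_{1/\operatorname{Re}w}$. This gives analyticity of the limit and lets me pass the final inequality to the limit.

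On the boundary line $\operatorname{Re}z=0$ we have $w=a+i s(b-a)$, so
\[
|f(is)|=\bigl|\Tr(U\rho^a\sigma^{1-a}VY)\bigr|\le\|\rho^a\sigma^{1-a}\|_1=\Phi_a,
\]
where $U,V$ are the unitary-phase partial isometries, which commute with $\rho^a$ and $\sigma^{1-a}$ respectively. Similarly $|f(1+is)|\le\Phi_b$. The three-lines theorem then gives
\[
|f(\theta)|\le\Phi_a^{1-\theta}\Phi_b^{\theta}.
\]
Since $w(\theta)=t$, taking the supremum over $Y$ yields $\Phi_t\le\Phi_a^{1-\theta}\Phi_b^\theta$. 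This is exactly midpoint-type log-convexity for every pair $a<b$, i.e.\ log-convexity of $t\mapsto\Phi_t$ on $(0,1)$. Since the function takes values in $[0,1]$, if it vanishes anywhere it vanishes identically (orthogonal supports), and the statement still holds with the convention $\log0=-\infty$.
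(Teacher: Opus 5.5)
Your proof is correct and follows the paper's route: the Hadamard three-lines theorem applied to $z\mapsto\Tr\bigl(\rho^{w(z)}\sigma^{1-w(z)}Y\bigr)$, boundary bounds from the support phases $\rho^{i\tau}$ and $\sigma^{-i\tau}$, trace duality, and spectral truncation for infinite dimensions. The only difference is where truncation enters: you use it to obtain analyticity of $f$ on $H$ itself, as a uniform limit of finite exponential sums, whereas the paper proves the inequality for the finite-rank truncations and then passes to the limit via $\Phi_r(\rho_n,\sigma_n)\to\Phi_r(\rho,\sigma)$. One small correction: the uniform estimate you need is $\|\rho^{w}-(\rho P_N)^{w}\|_{1/\operatorname{Re}w}=\bigl(\sum_{j>N}p_j\bigr)^{\operatorname{Re}w}\le\bigl(\sum_{j>N}p_j\bigr)^{a}$, together with its analogue for $\sigma$ with exponent $1-b$. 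A tail of $\sum_j p_j^{a}$ is the wrong quantity, since that sum may diverge.
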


\begin{proof}
We first prove the result when $H$ is finite-dimensional. Put $d=b-a$
and consider the closed strip
\[
\mathcal{D}
 =\{z\in\mathbb{C}:0\leq\operatorname{Re}z\leq1\}.
\]
For $z\in\mathcal{D}$, define $X(z)=\rho^{a+dz}\sigma^{1-a-dz}$.
The real parts of the two exponents satisfy
\[
\operatorname{Re}(a+dz)\geq a>0,\quad\mbox{and}\quad
\operatorname{Re}(1-a-dz)\geq1-b>0.
\]
Thus no negative power of $\rho$ or $\sigma$ occurs. For a positive
operator $A$, we define
\[
A^w
 =\sum_{\lambda>0}\lambda^w P_\lambda,
\qquad
\lambda^w=e^{w\log\lambda},
\]
where the sum is taken over the positive eigenvalues of $A$. In
particular, $A^w$ is taken to be zero on $\ker A$. With this convention,
$X(z)$ is continuous on $\mathcal{D}$ and analytic in its interior.
Let $K$ be an operator satisfying $\|K\|_\infty\leq1$, and define
$
f_K(z)=\Tr\bigl(K^*X(z)\bigr).
$
Then $f_K$ is continuous on the closed strip and analytic in its
interior.

We first estimate $f_K$ on the boundary line
$\operatorname{Re}z=0$. If $y\in\mathbb{R}$, then
\[
\begin{aligned}
X(iy)
 &=\rho^{a+idy}\sigma^{1-a-idy}\\
 &=\rho^{idy}
   \bigl(\rho^a\sigma^{1-a}\bigr)
   \sigma^{-idy}.
\end{aligned}
\]
On the support of $\rho$, the operator $\rho^{idy}$ is unitary, while
it is zero on $\ker\rho$. Similarly, $\sigma^{-idy}$ is unitary on the
support of $\sigma$. Since
\[
\rho^a\sigma^{1-a}
 =s(\rho)\rho^a\sigma^{1-a}s(\sigma),
\]
multiplication by these support unitaries does not change its singular
values. Therefore,
\[
\|X(iy)\|_1
 =\|\rho^a\sigma^{1-a}\|_1
 =\Phi_a(\rho,\sigma).
\]
Using trace duality, we obtain
\[
\begin{aligned}
|f_K(iy)|
 &=\left|\Tr\bigl(K^*X(iy)\bigr)\right|\\
 &\leq\|K\|_\infty\|X(iy)\|_1\\
 &\leq\Phi_a(\rho,\sigma).
\end{aligned}
\]
The same argument on the other boundary line gives
\[
\|X(1+iy)\|_1
 =\|\rho^b\sigma^{1-b}\|_1
 =\Phi_b(\rho,\sigma),
\]
and hence
$
|f_K(1+iy)|
 \leq\Phi_b(\rho,\sigma)
$
for every $y\in\mathbb{R}$.
By the Hadamard three-lines theorem
(see \cite[Theorem~23.1.1]{BealsWong}), we obtain
\[
|f_K(\theta)|
\leq
\Phi_a(\rho,\sigma)^{1-\theta}
\Phi_b(\rho,\sigma)^\theta.
\]
Since
\[
a+d\theta
 =a+(b-a)\theta
 =(1-\theta)a+\theta b
 =t,
\]
we have
\[
X(\theta)=\rho^t\sigma^{1-t}.
\]
Therefore,
\[
\left|
\Tr\bigl(K^*\rho^t\sigma^{1-t}\bigr)
\right|
\leq
\Phi_a(\rho,\sigma)^{1-\theta}
\Phi_b(\rho,\sigma)^\theta.
\]
Taking the supremum over all $K$ with $\|K\|_\infty\leq1$ and using
the dual characterization of the trace norm,
\[
\|Y\|_1
 =\sup_{\|K\|_\infty\leq1}
   |\Tr(K^*Y)|,
\]
we obtain
\[
\Phi_t(\rho,\sigma)
\leq
\Phi_a(\rho,\sigma)^{1-\theta}
\Phi_b(\rho,\sigma)^\theta.
\]
This proves the result in finite dimensions.

We now consider a separable Hilbert space $H$. For each $n\geq1$, set
\[
\rho_n
 =\rho\,1_{[1/n,\infty)}(\rho),
\qquad
\sigma_n
 =\sigma\,1_{[1/n,\infty)}(\sigma).
\]
Since $\rho$ is a positive trace-class operator, its eigenvalues,
counted with multiplicity, have a finite sum. Hence only finitely many
of them can be greater than or equal to $1/n$. More precisely, if
\[
e_n=1_{[1/n,\infty)}(\rho),
\]
then
\[
\rho e_n\geq\frac{1}{n}e_n.
\]
Therefore,
\[
1=\Tr\rho
\geq\Tr(\rho e_n)
\geq\frac{1}{n}\Tr(e_n).
\]
It follows that
\[
\operatorname{rank}(e_n)=\Tr(e_n)\leq n,
\]
and hence $\rho_n=\rho e_n$ has finite rank. The same argument applies
to $\sigma_n$.

Although $\rho_n$ and $\sigma_n$ need not have trace one, the
finite-dimensional argument above applies to arbitrary positive
finite-rank operators. We apply it on the finite-dimensional subspace
\[
H_n
 =s(\rho_n)H+s(\sigma_n)H.
\]
We next show that the corresponding overlaps converge. Let $0<r<1$.
By the spectral theorem,
\[
\|\rho^r-\rho_n^r\|_{1/r}^{1/r}
 =\Tr\bigl(\rho\,1_{(0,1/n)}(\rho)\bigr).
\]
The right-hand side tends to zero as $n\to\infty$. Hence
$\|\rho^r-\rho_n^r\|_{1/r}\longrightarrow0.
$
In the same way,
$\|\sigma^r-\sigma_n^r\|_{1/r}\longrightarrow0$.
We write
\[
\rho_n^r\sigma_n^{1-r}
 -\rho^r\sigma^{1-r}=
 (\rho_n^r-\rho^r)\sigma_n^{1-r}+
 \rho^r(\sigma_n^{1-r}-\sigma^{1-r}).
\]
The noncommutative H\"older inequality therefore gives
\[
\|\rho_n^r\sigma_n^{1-r} -\rho^r\sigma^{1-r}\|_1
\leq\|\rho_n^r-\rho^r\|_{1/r}
\|\sigma_n^{1-r}\|_{1/(1-r)}+
\|\rho^r\|_{1/r}
\|\sigma_n^{1-r}-\sigma^{1-r}\|_{1/(1-r)}.
\]
Moreover, $\|\rho^r\|_{1/r}=1$
and $\|\sigma_n^{1-r}\|_{1/(1-r)}=(\Tr\sigma_n)^{1-r}\leq1$.
It follows that
\[
\|\rho_n^r\sigma_n^{1-r}
      -\rho^r\sigma^{1-r}\|_1
\longrightarrow0.
\]
Consequently,
$\Phi_r(\rho_n,\sigma_n)
\longrightarrow
\Phi_r(\rho,\sigma)
$
for every $r\in(0,1)$.

Applying the finite-dimensional inequality to $\rho_n$ and $\sigma_n$
gives
\[
\Phi_t(\rho_n,\sigma_n)
\leq
\Phi_a(\rho_n,\sigma_n)^{1-\theta}
\Phi_b(\rho_n,\sigma_n)^\theta.
\]
Letting $n\to\infty$, we obtain
\[
\Phi_t(\rho,\sigma)
\leq
\Phi_a(\rho,\sigma)^{1-\theta}
\Phi_b(\rho,\sigma)^\theta.
\]
This completes the proof.
\end{proof}

\begin{corollary}
\label{thm:midpoint}
For any $\rho,\sigma\in\mathcal{S}(H)$ and every $0<\alpha<1$, 
\[
 \|\rho^{1/2}\sigma^{1/2}\|_1
 \leq
 \sqrt{\|\rho^\alpha\sigma^{1-\alpha}\|_1
       \|\rho^{1-\alpha}\sigma^\alpha\|_1}
 \leq\frac12\left(
 \|\rho^\alpha\sigma^{1-\alpha}\|_1+
 \|\rho^{1-\alpha}\sigma^\alpha\|_1
 \right)\leq1.
\]
Equivalently,
\[
 F(\rho,\sigma)\leq
 \sqrt{\Phi_\alpha(\rho,\sigma)
       \Phi_{1-\alpha}(\rho,\sigma)}
 \leq\mathcal F_\alpha(\rho,\sigma)\leq1.
\]
These bounds are sharp.
\end{corollary}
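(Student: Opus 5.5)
The corollary is a direct consequence of Theorem~\ref{thm:logconvex}, followed by the arithmetic--geometric mean inequality and Proposition~\ref{prop:basic-properties}(a). For $\alpha=\tfrac12$ all three expressions coincide with $F(\rho,\sigma)$, so there is nothing to prove. By the symmetry $\alpha\leftrightarrow1-\alpha$ of the middle and right-hand quantities, we may assume $0<\alpha<\tfrac12$.

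\textbf{Step 1 (geometric-mean bound).} Apply Theorem~\ref{thm:logconvex} with $a=\alpha$, $b=1-\alpha$ and $\theta=\tfrac12$. Since $0<a<b<1$, the hypotheses hold, and $t=\tfrac12(\alpha+1-\alpha)=\tfrac12$. This gives
\[
\Phi_{1/2}(\rho,\sigma)\leq\Phi_\alpha(\rho,\sigma)^{1/2}\Phi_{1-\alpha}(\rho,\sigma)^{1/2}.
\]
The left-hand side is $\|\rho^{1/2}\sigma^{1/2}\|_1=F(\rho,\sigma)$, so this is the first inequality.

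\textbf{Step 2 (arithmetic-mean bound).} Both $\Phi_\alpha$ and $\Phi_{1-\alpha}$ are nonnegative reals, so the scalar AM--GM inequality gives $\sqrt{\Phi_\alpha\Phi_{1-\alpha}}\leq\tfrac12(\Phi_\alpha+\Phi_{1-\alpha})=\mathcal F_\alpha$.

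\textbf{Step 3 (upper bound).} The final inequality $\mathcal F_\alpha\leq1$ is Proposition~\ref{prop:basic-properties}(a).

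\textbf{Sharpness.} Exhibit states for which every inequality is an equality. For pure states, the preceding proposition gives $\Phi_\alpha=\Phi_{1-\alpha}=|\langle\xi,\eta\rangle|=F$. The first two inequalities are then equalities for every overlap value in $[0,1]$, and the whole chain collapses to $1$ when $\xi=\eta$. More generally, $\rho=\sigma$ gives equality throughout. For the left two inequalities one can also use commuting states with $p_j=q_j$ on the common support, as in the commuting theorem. Taking $\rho$ and $\sigma$ with orthogonal supports makes all quantities $0$.

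The only point requiring care, and the main obstacle, is the interpretation of ``sharp''. The cleanest reading is that no constant better than $1$ can be inserted in any of the three inequalities, uniformly in $\alpha$. The examples above show exactly this: each inequality is attained, even simultaneously, for every $\alpha$. If a stronger claim is intended, for instance that the first and second inequalities can be strict independently, a diagonal qubit example with $p\neq q$ would show strictness of the AM--GM step, by the commuting equality criterion. I would include that example briefly.
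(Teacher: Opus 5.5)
Your proof is correct and matches the paper's argument: you apply Theorem~\ref{thm:logconvex} at the midpoint of $\alpha$ and $1-\alpha$, then the scalar AM--GM inequality, then the H\"older bound of Proposition~\ref{prop:basic-properties}(a), with $\rho=\sigma$ witnessing sharpness; your reduction to $\alpha<\tfrac12$ simply makes the hypothesis $a<b$ explicit. One caveat concerns only your optional extra remark: $p\neq q$ alone does not make the AM--GM step strict, since $p=(p_1,p_2)$, $q=(p_2,p_1)$ with $p_1\neq p_2$ gives $\Phi_\alpha=\Phi_{1-\alpha}$, so you would need a specific choice such as $p=(\tfrac12,\tfrac12)$, $q=(1,0)$, where $\Phi_\alpha=2^{-\alpha}\neq2^{-(1-\alpha)}=\Phi_{1-\alpha}$.
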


\begin{proof}
Apply Theorem~\ref{thm:logconvex} at the midpoint of $\alpha$ and
$1-\alpha$ to obtain
\[
 \Phi_{1/2}\leq
 \sqrt{\Phi_\alpha\Phi_{1-\alpha}}.
\]
The second inequality follows from the usual arithmetic--geometric
mean inequality, and the final bound follows from the noncommutative
H\"older inequality. These bounds cannot be improved in general.
Indeed, if $\rho=\sigma$, then
\[
\Phi_\alpha(\rho,\rho)
=\Phi_{1-\alpha}(\rho,\rho)
=F(\rho,\rho)=1,
\]
and hence equality holds throughout the chain.
\end{proof}

\begin{theorem}
\label{thm:midpoint-equality}
Fix $0<\alpha<1$ with $\alpha\neq\tfrac12$. Then the equality cases in
Corollary~\ref{thm:midpoint} are described as follows.

\begin{enumerate}[label=\textnormal{(\roman*)}]
\item
For any $\rho,\sigma\in\mathcal{S}(H)$,
$
\sqrt{
\Phi_\alpha(\rho,\sigma)
\Phi_{1-\alpha}(\rho,\sigma)}=
\mathcal{F}_\alpha(\rho,\sigma)
$
if and only if
$
\Phi_\alpha(\rho,\sigma)=\Phi_{1-\alpha}(\rho,\sigma).
$

\item
Suppose that $H$ is finite-dimensional and that $\rho$ and $\sigma$
have the same support. Then
$
F(\rho,\sigma)^2=\Phi_\alpha(\rho,\sigma)\Phi_{1-\alpha}(\rho,\sigma)
$
if and only if
$\rho=\sigma$.

\item
Under the assumptions of \textnormal{(ii)},
$
F(\rho,\sigma)=\mathcal{F}_\alpha(\rho,\sigma)
$
if and only if $\rho=\sigma$.

\item
Suppose that $\rho$ and $\sigma$ commute, and write their eigenvalues
in a common orthonormal eigenbasis as $(p_i)$ and $(q_i)$, respectively.
If their supports are not orthogonal, then
\[
F(\rho,\sigma)^2
=
\Phi_\alpha(\rho,\sigma)
\Phi_{1-\alpha}(\rho,\sigma)
\]
if and only if there exists a constant $c>0$ such that
$p_i=cq_i$ whenever $p_iq_i>0$.

\item
Under the commuting assumption in \textnormal{(iv)},
$F(\rho,\sigma)=\mathcal{F}_\alpha(\rho,\sigma)$
if and only if either the supports of $\rho$ and $\sigma$ are
orthogonal, or $p_i=q_i$ whenever $p_iq_i>0$.

\end{enumerate}

When the supports of $\rho$ and $\sigma$ are orthogonal, we have
\[
F(\rho,\sigma)
=
\Phi_\alpha(\rho,\sigma)
=
\Phi_{1-\alpha}(\rho,\sigma)
=
\mathcal{F}_\alpha(\rho,\sigma)
=0.
\]
Thus, in this case, equality holds in the first two inequalities of
the midpoint chain at the value zero.
\end{theorem}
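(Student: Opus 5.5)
We treat the five items separately. Items (i), (iii), (iv), (v) reduce to equality cases of scalar inequalities. Item (ii) needs an equality analysis of the interpolation argument in Theorem~\ref{thm:logconvex}, and we expect it to be the main obstacle.

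For (i) we use only that $\Phi_\alpha,\Phi_{1-\alpha}\ge0$. For nonnegative reals, $\sqrt{xy}=\tfrac12(x+y)$ is equivalent to $(\sqrt x-\sqrt y)^2=0$, i.e.\ $x=y$. Item (iii) follows from (ii): by Corollary~\ref{thm:midpoint}, $F\le\sqrt{\Phi_\alpha\Phi_{1-\alpha}}\le\mathcal F_\alpha$. Hence $F=\mathcal F_\alpha$ forces $F^2=\Phi_\alpha\Phi_{1-\alpha}$, and (ii) then gives $\rho=\sigma$. The converse is the computation $\Phi_\alpha(\rho,\rho)=1$.

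For (iv), we use the diagonal formulas $\Phi_\alpha=\sum_i p_i^\alpha q_i^{1-\alpha}$ and $F=\sum_i\sqrt{p_iq_i}$. Put $x_i=(p_i^\alpha q_i^{1-\alpha})^{1/2}$ and $y_i=(p_i^{1-\alpha}q_i^{\alpha})^{1/2}$. Then $F=\sum_i x_iy_i$, and $F^2\le\Phi_\alpha\Phi_{1-\alpha}$ is exactly Cauchy--Schwarz. Both $x_i$ and $y_i$ vanish when $p_iq_i=0$, so only indices in the common support matter. Equality (with $x,y\neq0$ by non-orthogonality of the supports) holds iff $x_i^2=\lambda y_i^2$ there for some $\lambda>0$. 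This reads $(p_i/q_i)^{2\alpha-1}=\lambda$, and since $\alpha\ne\tfrac12$ it is equivalent to $p_i/q_i=c$ constant. Item (v) is the equality statement of the commuting theorem of Section~2, together with the orthogonal-support case. The closing remark follows from Proposition~\ref{prop:basic-properties}(f): orthogonal supports give $\rho^\alpha\sigma^{1-\alpha}=\rho^{1-\alpha}\sigma^\alpha=\rho^{1/2}\sigma^{1/2}=0$.

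For (ii), the ``if'' direction is trivial, since all quantities equal $1$ when $\rho=\sigma$. For ``only if'', compress to the common support $P$, so that $\rho$ and $\sigma$ are faithful on $PH$ and $\Phi_t$ is given by the same formula there. Put $g(t)=\log\Phi_t$ (note $\Phi_t>0$ since the supports coincide). By Theorem~\ref{thm:logconvex}, $g$ is convex. Equality $g(\tfrac12)=\tfrac12(g(\alpha)+g(1-\alpha))$ then forces $g$ to be affine on $[\alpha,1-\alpha]$. The plan is to upgrade this to rigidity via the three-lines argument. Take $K$ to be the partial isometry in the polar decomposition of $\rho^{1/2}\sigma^{1/2}$, so that $f_K(\tfrac12)=F$. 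Then the function $h(z)=f_K(z)\,\Phi_\alpha^{-(1-z')}\Phi_{1-\alpha}^{-z'}$ (with $z'$ the strip variable from $a=\alpha$, $b=1-\alpha$) has modulus at most $1$ on the strip and attains $1$ at an interior point. By maximum modulus it is a unimodular constant.

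In finite dimensions, with eigenbases $\rho=\sum p_i|e_i\rangle\langle e_i|$ and $\sigma=\sum q_j|f_j\rangle\langle f_j|$, the function $f_K(t)=\sum_{i,j}c_{ij}\,p_i^{t}q_j^{1-t}$ is an exponential polynomial in $t$. The identity $f_K(t)=Ce^{\beta t}$ then extends to all $t\in(0,1)$, and linear independence of distinct exponentials kills every pair $(i,j)$ with $\log(p_i/q_j)\ne\beta$. The step I expect to be hardest is turning this into $\rho=\sigma$. What I would try first is to use $f_K(t)\le\Phi_t$ and let $t\to0^+$ and $t\to1^-$: with equal supports, $\Phi_t\to\operatorname{Tr}(s(\rho)\sigma)=1$ and $\Phi_t\to\operatorname{Tr}(\rho\, s(\sigma))=1$. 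I would aim to show that the surviving coefficients make $f_K(0)=f_K(1)=1$, i.e.\ that $K$ aligns with the support projections. Then $Ce^{\beta t}$ equals $1$ at both ends, so $\beta=0$ and $C=1$, giving $F=f_K(\tfrac12)=1$. Equality in Uhlmann's fidelity then yields $\rho=\sigma$. If the endpoint alignment resists, the fallback is to use the surviving-pair structure directly: each $e_i$ can only overlap those $f_j$ with $p_i=e^{\beta}q_j$. This would make $\rho$ and $e^{\beta}\sigma$ agree on the common support, and the trace normalization then forces $e^\beta=1$, hence $\rho=\sigma$.
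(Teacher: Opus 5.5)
Your arguments for (i), (iii), (v) and the closing remark are fine. Your Cauchy--Schwarz proof of (iv), with $x_i^2=p_i^\alpha q_i^{1-\alpha}$ and $y_i^2=p_i^{1-\alpha}q_i^\alpha$, is more direct than the paper's tilted-variance computation.

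\textbf{The gap is in (ii)}, exactly at the step you flagged. The maximum-modulus argument only gives information on $[\alpha,1-\alpha]$: there $f_K(t)=\Phi_t=Ae^{\beta t}$ with $A>0$ and $\beta\in\mathbb R$. Off that interval, the single scalar function $f_K$ only bounds $\Phi_t$ from below, $Ae^{\beta t}=|f_K(t)|\le\Phi_t\le1$. Letting $t\to0^+$ and $t\to1^-$ gives only $A\le1$ and $Ae^\beta\le1$, which is the wrong direction. Convexity of $\log\Phi_t$ together with the endpoint limits cannot close this either: $\max\{-kt,-ka,-k(1-t)\}$ is convex, constant on $[a,1-a]$, and tends to $0$ at both ends.

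Your fallback does not repair this. Linear independence of the exponentials only says that, for each $\gamma\ne\beta$, the \emph{grouped} coefficient $\sum_{(i,j):\,\log(p_i/q_j)=\gamma}c_{ij}q_j$ vanishes. Distinct pairs with equal ratios can cancel. Moreover, $c_{ij}=\langle e_i,f_j\rangle\langle f_j,K^*e_i\rangle=0$ does not force $\langle e_i,f_j\rangle=0$. So the structure ``$e_i$ overlaps only $f_j$ with $p_i=e^\beta q_j$'' is not derived. That structure is the content of Theorem~\ref{thm:spectral-equality}. The paper proves it at the operator level, where Hilbert--Schmidt orthogonality of the $P_iQ_j$ prevents such cancellation.

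What you need is a way to carry the equality from $[\alpha,1-\alpha]$ out to the endpoints. The paper uses real analyticity of $\Phi_t$ itself. After compressing to the common support, $\sigma^{1-t}\rho^{2t}\sigma^{1-t}$ is positive definite, so $\Phi_t=\Tr(\sigma^{1-t}\rho^{2t}\sigma^{1-t})^{1/2}$ is real analytic. Hence $\log\Phi_t$ is affine on all of $(0,1)$. The limits $\Phi_{0+}=\Phi_{1-}=1$ then force $\Phi_t\equiv1$, and Theorem~\ref{thm:upper-rigidity} gives $\rho=\sigma$.

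Alternatively, your own route can be completed by a norming argument:
\begin{enumerate}[label=\textup{(\alph*)}]
\item You have $\Tr(K^*X(t))=\|X(t)\|_1$ for all $t\in[\alpha,1-\alpha]$, and $X(t)=\rho^t\sigma^{1-t}$ is invertible on the common support. So $K$ must equal the unitary polar factor of each such $X(t)$, which gives $Y(t)=K^*\rho^t\sigma^{1-t}\ge0$ on that interval.
\item By real analyticity, $Y(t)-Y(t)^*$ vanishes for all real $t$.
\item Since $Y(t)$ is invertible for every real $t$, its eigenvalues never cross $0$. Hence $Y(0)=K^*\sigma$ and $Y(1)=K^*\rho$ are positive.
\item Therefore $f_K(0)=\|\sigma\|_1=1=f_K(1)$. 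This forces $A=1$ and $\beta=0$, so $F=f_K(\tfrac12)=1$ and $\rho=\sigma$.
\end{enumerate}
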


\begin{proof}
Part \textnormal{(i)} is precisely the equality condition in the scalar
arithmetic--geometric mean inequality.
We now prove \textnormal{(ii)}. Since $\rho$ and $\sigma$ have the same
support, we may restrict them to the subspace
\[
s(\rho)H=s(\sigma)H.
\]
On this subspace, both operators are positive definite. This
restriction does not change any of the quantities appearing in the
equality. Define
\[
g(t)=\log\Phi_t(\rho,\sigma),
\qquad 0<t<1.
\]
By Theorem~\ref{thm:logconvex}, the function $g$ is convex on $(0,1)$.
Let
\[
a=\min\{\alpha,1-\alpha\},
\qquad
b=\max\{\alpha,1-\alpha\}.
\]
Since $\alpha\neq\tfrac12$, we have
\[
0<a<\tfrac12<b<1
\qquad\text{and}\qquad
\frac{a+b}{2}=\frac12.
\]
The equality
\[
F(\rho,\sigma)^2
=
\Phi_\alpha(\rho,\sigma)
\Phi_{1-\alpha}(\rho,\sigma)
\]
can be written as
\[
2g\left(\frac12\right)=g(a)+g(b),
\]
because
\[
F(\rho,\sigma)=\Phi_{1/2}(\rho,\sigma).
\]
Thus equality holds in the convexity inequality at the midpoint of
$a$ and $b$. A convex function can have equality at an interior point
of a chord only when it agrees with that chord on the whole interval.
Therefore, $g$ is affine on $[a,b]$.

We next show that this affine behavior extends to all of $(0,1)$.
Since $\rho$ and $\sigma$ are positive definite matrices, the maps
$t\longmapsto\rho^{2t}$ and $t\longmapsto\sigma^{1-t}$
are real analytic. Hence,
$t\longmapsto\sigma^{1-t}\rho^{2t}\sigma^{1-t}$
is a real-analytic family of positive definite matrices. The principal
square-root map is real analytic on the positive definite matrices, so
$t\longmapsto\Tr\left(\sigma^{1-t}\rho^{2t}\sigma^{1-t}\right)^{1/2}=\Phi_t(\rho,\sigma)$
is positive and real analytic on $(0,1)$. It follows that
$g(t)=\log\Phi_t(\rho,\sigma)$
is also real analytic on $(0,1)$.
Since $g$ is affine on the nonempty interval $(a,b)$, we have
\[
g''(t)=0,
\qquad a<t<b.
\]
The function $g''$ is real analytic on $(0,1)$. By the identity
theorem for real-analytic functions, a real-analytic function that
vanishes on a nonempty open interval must vanish everywhere.
Consequently,
\[
g''(t)=0,
\qquad 0<t<1,
\]
and hence $g$ is affine on all of $(0,1)$.

It remains to determine this affine function. Since $\rho$ and
$\sigma$ are positive definite on their common support, functional
calculus gives
\[
\lim_{t\downarrow0}\rho^t=I
\qquad\text{and}\qquad
\lim_{t\uparrow1}\sigma^{1-t}=I.
\]
Therefore,
\[
\lim_{t\downarrow0}\Phi_t(\rho,\sigma)
=
\|\sigma\|_1
=
\Tr\sigma
=1,
\]
and similarly,
\[
\lim_{t\uparrow1}\Phi_t(\rho,\sigma)
=
\|\rho\|_1
=
\Tr\rho
=1.
\]
It follows that
\[
\lim_{t\downarrow0}g(t)
=
\lim_{t\uparrow1}g(t)
=0.
\]
An affine function whose limits at both endpoints are zero must be
identically zero. Hence,
\[
g(t)=0,
\qquad 0<t<1,
\]
or equivalently,
\[
\Phi_t(\rho,\sigma)=1,
\qquad 0<t<1.
\]
The upper-rigidity result, Theorem~\ref{thm:upper-rigidity}, now implies
that $\rho=\sigma$.
Conversely, if $\rho=\sigma$, then
\[
\Phi_t(\rho,\rho)
=\|\rho^t\rho^{1-t}\|_1
=\|\rho\|_1
=1
\]
for every $0<t<1$. In particular,
$F(\rho,\rho)^2=\Phi_\alpha(\rho,\rho)\Phi_{1-\alpha}(\rho,\rho)$,
which proves the converse.

Part \textnormal{(iii)} follows by combining \textnormal{(i)} and
\textnormal{(ii)}.  For \textnormal{(iv)}, let
$I=\{i:p_iq_i>0\}$.  If $I$ is nonempty, then
\[
 \Phi_t(\rho,\sigma)=\sum_{i\in I}q_i
 \exp\left(t\log\frac{p_i}{q_i}\right).
\]
The second derivative of its logarithm is the variance of
$\log(p_i/q_i)$ with respect to the corresponding tilted probability
weights.  It vanishes on a nontrivial interval exactly when these ratios
are constant on $I$.  This is precisely the stated condition.

If the ratios in \textnormal{(iv)} equal $c$, and
$Q=\sum_{i\in I}q_i$, then $\Phi_t(\rho,\sigma)=c^tQ$.
For $\alpha\ne1/2$, equality
$\Phi_\alpha=\Phi_{1-\alpha}$ is therefore equivalent to $c=1$.
Combining this observation with \textnormal{(i)} proves
\textnormal{(v)}.  Finally, orthogonal supports make every product
$\rho^t\sigma^{1-t}$ equal to zero.
\end{proof}

\begin{corollary}
\label{cor:functional-equality}
Let $\rho$ and $\sigma$ be finite-dimensional nonorthogonal states,
and fix $0<\alpha<1$ with $\alpha\neq\tfrac12$. Then the following
statements are equivalent:
\begin{enumerate}[label=\textnormal{(\roman*)}]
\item
$
F(\rho,\sigma)=\mathcal{F}_\alpha(\rho,\sigma);
$

\item equality holds in both of the first two inequalities in
Corollary~\ref{thm:midpoint};

\item the function
$
t\longmapsto\Phi_t(\rho,\sigma)
$
is constant on $(0,1)$.
\end{enumerate}
Whenever these equivalent conditions hold, the value of
$\Phi_t(\rho,\sigma)$ is independent of $t$, and for every $0<t<1$,
\[
F(\rho,\sigma)
=
\sqrt{\Phi_t(\rho,\sigma)\Phi_{1-t}(\rho,\sigma)}
=
\mathcal{F}_t(\rho,\sigma).
\]
This common value need not be equal to $1$.
\end{corollary}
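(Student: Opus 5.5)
The plan is to prove the cycle \textnormal{(i)}$\Rightarrow$\textnormal{(ii)}$\Rightarrow$\textnormal{(iii)}$\Rightarrow$\textnormal{(i)}. The new ingredient compared with Theorem~\ref{thm:midpoint-equality}\textnormal{(ii)} is that supports may differ, so the endpoint limits used there are no longer available. We instead use the symmetry $\Phi_\alpha=\Phi_{1-\alpha}$ coming from the arithmetic--geometric mean step.

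For \textnormal{(i)}$\Rightarrow$\textnormal{(ii)}: by Corollary~\ref{thm:midpoint} we have $F\le\sqrt{\Phi_\alpha\Phi_{1-\alpha}}\le\mathcal F_\alpha$. Hence $F=\mathcal F_\alpha$ forces equality in both inequalities. For \textnormal{(iii)}$\Rightarrow$\textnormal{(i)}: if $\Phi_t\equiv c$, then $\Phi_\alpha=\Phi_{1-\alpha}=\Phi_{1/2}=F=c$, so $\mathcal F_\alpha=c=F$. The same argument gives the displayed identity $F=\sqrt{\Phi_t\Phi_{1-t}}=\mathcal F_t$ for every $t$.

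The main step is \textnormal{(ii)}$\Rightarrow$\textnormal{(iii)}. Since the states are nonorthogonal, Proposition~\ref{prop:basic-properties}\textnormal{(f)} gives $\Phi_t>0$ for all $t$, so $g(t)=\log\Phi_t(\rho,\sigma)$ is finite and, by Theorem~\ref{thm:logconvex}, convex.
\begin{itemize}[label=--]
\item Equality in the second inequality gives $\Phi_\alpha=\Phi_{1-\alpha}$ by Theorem~\ref{thm:midpoint-equality}\textnormal{(i)}, that is, $g(a)=g(b)$ with $a=\min\{\alpha,1-\alpha\}$ and $b=\max\{\alpha,1-\alpha\}$.
\item Equality in the first inequality gives $2g(\tfrac12)=g(a)+g(b)$. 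As in the proof of Theorem~\ref{thm:midpoint-equality}\textnormal{(ii)}, $g$ is then affine on $[a,b]$.
\item Because $g(a)=g(b)$ and $a\neq b$, $g$ is in fact constant on $[a,b]$.
\end{itemize}
It remains to propagate constancy to all of $(0,1)$. For this I need real-analyticity of $t\mapsto\Phi_t$ without assuming equal supports.

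I expect this analyticity to be the main obstacle, since $M(t)=\sigma^{1-t}\rho^{2t}\sigma^{1-t}$ is only positive semidefinite and the square root is not analytic at singular matrices. The plan is as follows.
\begin{itemize}[label=--]
\item With the convention $A^w=\sum_{\lambda>0}\lambda^wP_\lambda$, the family $M(t)$ is a real-analytic Hermitian family.
\item By Rellich's theorem its eigenvalues can be chosen as real-analytic branches $\lambda_k(t)$.
\item The rank of $M(t)$ is constant in $t$. Indeed, $M(t)=B(t)^*B(t)$ with $B(t)=\rho^t\sigma^{1-t}$. Here $\rho^t$ is invertible on $s(\rho)H$ and vanishes on $\ker\rho$, and $\sigma^{1-t}$ maps $s(\sigma)H$ bijectively onto itself. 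Hence $\operatorname{rank}B(t)=\operatorname{rank}\bigl(s(\rho)s(\sigma)\bigr)$ for every $t$.
\item Each analytic branch is either identically zero or has isolated zeros. Constancy of the number of zero eigenvalues then forces the nonzero branches never to vanish.
\item Therefore $\Phi_t=\sum_k\sqrt{\lambda_k(t)}$, summed over the nonzero branches, is real analytic and positive on $(0,1)$.
\end{itemize}
By the identity theorem, $g$ is then constant on $(0,1)$, which proves \textnormal{(iii)}.

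For the final claim that the common value need not be $1$, I would exhibit commuting states with different overlapping supports. Take $\rho=\operatorname{diag}(\tfrac12,\tfrac12,0)$ and $\sigma=\operatorname{diag}(\tfrac12,0,\tfrac12)$. Then $\Phi_t=\tfrac12$ for all $t$, consistent with Theorem~\ref{thm:midpoint-equality}\textnormal{(v)}.
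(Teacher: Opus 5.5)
Your proposal is correct and follows essentially the same route as the paper. Midpoint equality plus convexity of $g=\log\Phi_t$ gives affineness on $[\alpha,1-\alpha]$, the arithmetic--geometric equality $\Phi_\alpha=\Phi_{1-\alpha}$ makes $g$ constant there, and real-analyticity of $t\mapsto\Phi_t$ (from the $t$-independent rank of $\rho^t\sigma^{1-t}$) extends this to all of $(0,1)$. Your Rellich-branch argument is a more explicit version of the paper's local spectral-separation step, and your diagonal example proves the ``need not equal $1$'' claim, which the paper's proof leaves to its later projection example.
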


\begin{proof}
Since $\rho$ and $\sigma$ are not orthogonal, their support subspaces
are not orthogonal. Positive powers do not change the support of a
positive operator. Hence, for every $0<t<1$,
\[
\rho^t\sigma^{1-t}\neq0,
\]
and therefore $\Phi_t(\rho,\sigma)=\|\rho^t\sigma^{1-t}\|_1>0$.
Thus the function
$g(t)=\log\Phi_t(\rho,\sigma)$
is well defined on $(0,1)$.
We first note that $g$ is real analytic. Indeed, by the spectral
theorem, the maps
\[
t\longmapsto\rho^t
\qquad\text{and}\qquad
t\longmapsto\sigma^{1-t}
\]
are real analytic for $0<t<1$, where the powers are taken to be zero
on the corresponding kernels. Moreover, positive powers have the same
support as the original operators. It follows that the rank of
$\rho^t\sigma^{1-t}$
is independent of $t\in(0,1)$. Consequently, the number of its nonzero
singular values is constant. These singular values stay separated
from zero in a sufficiently small neighborhood of each point
$t\in(0,1)$. Functional calculus applied to
\[
\bigl(\rho^t\sigma^{1-t}\bigr)^*
\bigl(\rho^t\sigma^{1-t}\bigr)
=
\sigma^{1-t}\rho^{2t}\sigma^{1-t}
\]
then shows that
\[
t\longmapsto
\Phi_t(\rho,\sigma)
=
\Tr\left(
\sigma^{1-t}\rho^{2t}\sigma^{1-t}
\right)^{1/2}
\]
is real analytic. Since this function is positive, $g$ is also real
analytic on $(0,1)$.

We now prove that \textnormal{(i)} implies \textnormal{(iii)}. Suppose
that
\[
F(\rho,\sigma)=\mathcal{F}_\alpha(\rho,\sigma).
\]
By Corollary~\ref{thm:midpoint},
\[
F(\rho,\sigma)
\leq
\sqrt{
\Phi_\alpha(\rho,\sigma)
\Phi_{1-\alpha}(\rho,\sigma)}
\leq
\mathcal{F}_\alpha(\rho,\sigma).
\]
Since the first and last terms are equal, equality must hold in both
inequalities. In particular,
\[
F(\rho,\sigma)^2
=
\Phi_\alpha(\rho,\sigma)
\Phi_{1-\alpha}(\rho,\sigma).
\]
Since $F(\rho,\sigma)=\Phi_{1/2}(\rho,\sigma)$,
taking logarithms gives
\[
2g\left(\frac12\right)
=
g(\alpha)+g(1-\alpha).
\]
By Theorem~\ref{thm:logconvex}, $g$ is convex. Thus the preceding
equality is equality in the convexity inequality at the midpoint of
$\alpha$ and $1-\alpha$. It follows that $g$ is affine on the interval
joining these two points.

Since $g$ is real analytic, this affine behavior extends to all of
$(0,1)$. More precisely, $g''$ vanishes on a nonempty open interval.
The function $g''$ is real analytic, so the identity theorem gives
\[
g''(t)=0,
\qquad 0<t<1.
\]
Therefore, $g$ is affine on $(0,1)$.

Equality in the second inequality of the midpoint chain is equality
in the arithmetic--geometric mean inequality. Hence,
$\Phi_\alpha(\rho,\sigma)=\Phi_{1-\alpha}(\rho,\sigma)$,
and therefore $g(\alpha)=g(1-\alpha)$.

An affine function that takes the same value at two distinct points must be constant. Since $\alpha\neq\tfrac12$, 
the points $\alpha$ and
$1-\alpha$ are distinct. Thus $g$ is constant on $(0,1)$, and so is
$t\longmapsto\Phi_t(\rho,\sigma)$.
This proves \textnormal{(i)}$\Rightarrow$\textnormal{(iii)}.

Suppose next that \textnormal{(iii)} holds. Then there exists a
constant $c>0$ such that
$\Phi_t(\rho,\sigma)=c$
for every $0<t<1$. In particular,
\[
F(\rho,\sigma)
=\Phi_{1/2}(\rho,\sigma)
=c,
\]
and
\[
\sqrt{
\Phi_\alpha(\rho,\sigma)
\Phi_{1-\alpha}(\rho,\sigma)}
=\sqrt{c^2}=c.
\]
Also,
\[
\mathcal{F}_\alpha(\rho,\sigma)
=\frac12(c+c)=c.
\]
Consequently,
\[
F(\rho,\sigma)
=
\sqrt{
\Phi_\alpha(\rho,\sigma)
\Phi_{1-\alpha}(\rho,\sigma)}
=
\mathcal{F}_\alpha(\rho,\sigma).
\]
Thus both of the first two inequalities in
Corollary~\ref{thm:midpoint} are equalities. This proves
\[
\textnormal{(iii)}
\Longrightarrow
\textnormal{(ii)}
\Longrightarrow
\textnormal{(i)}.
\]
Therefore, the three conditions are equivalent.
\end{proof}

\begin{theorem}
\label{thm:spectral-equality}
Let $\rho$ and $\sigma$ be nonorthogonal finite-dimensional states with
spectral decompositions over their distinct positive eigenvalues,
\[
 \rho=\sum_i r_iP_i,\quad \sigma=\sum_j s_jQ_j,
\]
and fix $\alpha\ne1/2$.  Equality in the log-convexity step of
Corollary~\ref{thm:midpoint} holds if and only if there is a constant $c>0$
such that
\begin{equation}\label{eq:spectral-edge-ratio}
 P_iQ_j\ne0\quad\Longrightarrow\quad r_i=cs_j.
\end{equation}
Moreover,
$ F(\rho,\sigma)=\mathcal F_\alpha(\rho,\sigma)$
if and only if
\begin{equation}\label{eq:spectral-edge-equality}
 P_iQ_j\ne0\quad\Longrightarrow\quad r_i=s_j.
\end{equation}
Together with the orthogonal case, this characterizes all equality pairs
in finite dimensions.
\end{theorem}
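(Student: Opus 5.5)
The plan is to reduce everything to the behaviour of $g(t)=\log\Phi_t(\rho,\sigma)$ and to the block expansion
\[
X(z)=\rho^{z}\sigma^{1-z}=\sum_{i,j} r_i^{z}s_j^{1-z}\,P_iQ_j ,
\]
which is an exponential polynomial in $z$ with frequencies $\log(r_i/s_j)$ attached to the \emph{edges} $(i,j)$ with $P_iQ_j\neq0$.

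By the proof of Corollary~\ref{cor:functional-equality}, equality in the log-convexity step, $F^2=\Phi_\alpha\Phi_{1-\alpha}$, is equivalent to $g$ being affine on $(0,1)$. That proof uses convexity from Theorem~\ref{thm:logconvex}, equality at the midpoint of $\alpha,1-\alpha$, and real analyticity of $g$ (nonorthogonality gives $\Phi_t>0$). So the first claim becomes: $g$ is affine if and only if \eqref{eq:spectral-edge-ratio} holds.

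The sufficiency direction is a direct computation. If $r_i=cs_j$ on every edge, then $r_i^{t}s_j^{1-t}=c^{t}s_j$ on every edge. Hence
\[
X(t)=c^{t}\sum_{i,j}s_jP_iQ_j=c^{t}\,s(\rho)\sigma .
\]
This gives $\Phi_t=c^{t}\|s(\rho)\sigma\|_1$, so $g$ is affine.

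For the second claim, Theorem~\ref{thm:midpoint-equality}(i) and Corollary~\ref{cor:functional-equality} show that $F=\mathcal F_\alpha$ means $g$ is affine \emph{and} $\Phi_\alpha=\Phi_{1-\alpha}$. Once the first claim is known, we have $\Phi_t=c^{t}\|s(\rho)\sigma\|_1$, and with $\alpha\ne\tfrac12$ the condition $\Phi_\alpha=\Phi_{1-\alpha}$ forces $c=1$. This gives \eqref{eq:spectral-edge-equality}; conversely, $c=1$ makes $\Phi_t$ constant. The orthogonal case is the closing remark of Theorem~\ref{thm:midpoint-equality}.

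The main obstacle is necessity: from $\Phi_t=Ae^{\beta t}$ we must deduce that every edge has ratio $r_i/s_j=e^{\beta}$. I would revisit the three-lines argument with an extremal $K$. Take $K$ to be the partial isometry in the polar decomposition of $X(1/2)$, and apply the argument of Theorem~\ref{thm:logconvex} on a strip containing $\tfrac12$ in its interior. Then
\[
f(z)=\Tr\bigl(K^*X(z)\bigr)\,e^{-\beta z}
\]
is bounded by $A$ on the strip and attains $A$ at an interior point. By the maximum principle, $f\equiv A$.

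Consequently $\Tr(K^*X(t))=\|X(t)\|_1$ for all real $t$. So $K$ is a common polar factor, $X(t)=K|X(t)|$, and in particular $K^*X(t)\ge0$. Moreover, linear independence of distinct exponentials forces every frequency other than $\beta$ to carry zero total coefficient in $\Tr(K^*X(z))$.

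Cancellations among those coefficients are the real danger, so I would then pass to a positive quantity in which no cancellation is possible. The first candidate is the Hilbert--Schmidt expansion
\[
\|X(t)\|_2^2=\sum_{i,j}r_i^{2t}s_j^{2-2t}\,\|P_iQ_j\|_2^2 ,
\]
whose coefficients are strictly positive exactly on the edges. The aim is to show, using the common polar factor, that $e^{-\beta t}X(t)$ is independent of $t$. Then $\|X(t)\|_2^2=e^{2\beta t}\|X(0)\|_2^2$, and independence of exponentials with positive coefficients yields $r_i/s_j=e^{\beta}$ on every edge.

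If that route stalls, my fallback is to differentiate $K^*X(t)\ge0$ at the point where it has extremal trace, and compress by $Q_j$ to isolate individual edges.
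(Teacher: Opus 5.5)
\textbf{Gap in the necessity direction.} Your reduction to ``$g$ affine on $(0,1)$'', the sufficiency computation, and the derivation of the second claim from the first all match the paper. The gap is in necessity, which is the substance of the theorem. Your argument stops at ``the aim is to show, using the common polar factor, that $e^{-\beta t}X(t)$ is independent of $t$.'' That statement is essentially the conclusion, and you give no derivation of it.

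Write $X(t)=\sum_\lambda e^{\lambda t}N_\lambda$ with $N_\lambda=\sum_{r_i/s_j=e^{\lambda}}s_jP_iQ_j$. What you have actually established is:
\begin{enumerate}[label=\textup{(\arabic*)}]
\item $K^*X(t)=|X(t)|\geq0$ for $t\in(0,1)$;
\item $\Tr(K^*N_\lambda)=0$ for $\lambda\neq\beta$.
\end{enumerate}
These are constraints on a bounded interval only. Positivity plus a vanishing-trace condition there do not by themselves exclude nonzero traceless components $K^*N_\lambda$ at other frequencies, because a dominant positive term $e^{\beta t}K^*N_\beta$ can absorb them. The fallback (``differentiate and compress by $Q_j$'') is not yet an argument.

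\textbf{The missing idea: leave $(0,1)$.} Take negative powers on the supports. Then $X(t)=\rho^t s(\rho)s(\sigma)\sigma^{1-t}$ is defined for every real $t$. Both outer factors are invertible on the respective supports, so the rank of $X(t)$ equals that of $s(\rho)s(\sigma)$ for all $t$. Constant rank keeps the nonzero singular values locally away from $0$, so $t\mapsto\lVert X(t)\rVert_1$ is real analytic on $\mathbb R$. Hence $\Phi_t=Ae^{\beta t}$ holds for all $t\in\mathbb R$, not just on $(0,1)$.

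Your own Hilbert--Schmidt observation then finishes at once. The $P_iQ_j$ are mutually HS-orthogonal, so for every ratio $e^{\lambda}$ that actually occurs we have $\lVert N_\lambda\rVert_2>0$ and
\[
Ae^{\beta t}=\lVert X(t)\rVert_1\geq\lVert X(t)\rVert_2\geq e^{\lambda t}\lVert N_\lambda\rVert_2,
\qquad t\in\mathbb R .
\]
Letting $t\to+\infty$ and $t\to-\infty$ forces $\lambda=\beta$.

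This is exactly the paper's proof. There it is phrased as $\lim_{t\to\pm\infty}\Phi_t^{1/t}$ being the largest, respectively smallest, edge ratio, with HS-orthogonality guaranteeing that the dominant coefficient is nonzero. Once this continuation is in place, the extremal-$K$ and maximum-principle step is correct but unnecessary.
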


\begin{proof}
If \eqref{eq:spectral-edge-ratio} holds, then, for every real $t$,
\begin{align*}
 \rho^t\sigma^{1-t}
 &=\sum_{i,j}r_i^ts_j^{1-t}P_iQ_j\\
 &=c^t\sum_{i,j}s_jP_iQ_j
 =c^t s(\rho)\sigma.
\end{align*}
Consequently
\begin{equation}\label{eq:exponential-overlap-curve}
 \Phi_t(\rho,\sigma)=c^t\lVert s(\rho)\sigma\rVert_1.
\end{equation}
Thus \eqref{eq:spectral-edge-ratio} gives equality in the log-convexity
step.  Conversely, suppose equality holds in that step.  The argument in
Corollary~\ref{cor:functional-equality}, without using the
arithmetic--geometric equality, shows that
\begin{equation}\label{eq:affine-overlap-curve}
 \Phi_t(\rho,\sigma)=Kc^t
\end{equation}
on $(0,1)$ for some $K,c>0$.  The product
\[
 \rho^t\sigma^{1-t}
 =\sum_{P_iQ_j\ne0}s_j
   \left(\frac{r_i}{s_j}\right)^tP_iQ_j
\]
where negative powers are taken on the corresponding supports.  Both
factors are invertible on their supports, so the rank equals the rank of
$s(\rho)s(\sigma)$ and is independent of $t$.  The product therefore
depends real analytically on $t\in\mathbb R$ with constant rank.
Consequently its trace norm is real analytic, and
\eqref{eq:affine-overlap-curve} extends to every real $t$.

Let
\[
 M=\max_{P_iQ_j\ne0}\frac{r_i}{s_j},\qquad
 m=\min_{P_iQ_j\ne0}\frac{r_i}{s_j}.
\]
The matrices $P_iQ_j$ are mutually orthogonal in the Hilbert--Schmidt
inner product when the ordered pairs $(i,j)$ differ.  Hence the sum of
the terms corresponding to $M$ is nonzero, and comparison of the dominant
terms as $t\to+\infty$ gives
\[
 \lim_{t\to+\infty}\Phi_t^{1/t}=M.
\]
Equation \eqref{eq:affine-overlap-curve} gives the same limit as $c$, so
$M=c$.  Applying the same argument as $t\to-\infty$ gives $m=c$.
Therefore every active ratio equals $c$, proving
\eqref{eq:spectral-edge-ratio}.

This proves the first assertion.  For the second, the
arithmetic--geometric equality and
$\alpha\ne1/2$ force $c^\alpha=c^{1-\alpha}$, hence $c=1$.  This
classification contains the commuting characterization and the normalized
equal-rank projection examples, and also covers noncommuting states with
unequal overlapping supports.
\end{proof}

\section{Equality phenomena and rigidity}

The preceding results give a complete description of the midpoint
equality cases in finite dimensions. They also cover noncommuting
states whose supports overlap without being equal.

Equality does not require the two states to have the same support or to
commute. It may hold when the supports overlap but are different, and
also for noncommuting pairs. The following result gives a simple class
of such examples and illustrates Theorem~\ref{thm:spectral-equality}.

\begin{proposition}
Fix $\alpha\neq\tfrac12$. Equality
\[
 F(\rho,\sigma)=\mathcal F_\alpha(\rho,\sigma)
\]
holds if and only if
\[
 \Phi_\alpha(\rho,\sigma)=\Phi_{1-\alpha}(\rho,\sigma)
\]
and equality holds in the log-convexity estimate
\[
 \Phi_{1/2}(\rho,\sigma)^2
 =\Phi_\alpha(\rho,\sigma)
  \Phi_{1-\alpha}(\rho,\sigma).
\]
In particular, let $P$ and $Q$ be finite-rank projections of the same
rank $r$, and set $\rho=P/r$ and $\sigma=Q/r$. Then
\[
 \Phi_t(\rho,\sigma)=\frac1r\|PQ\|_1
 =F(\rho,\sigma),\qquad 0<t<1.
\]
Consequently the whole parameter curve is constant for this class,
whether or not $P$ and $Q$ commute.
\end{proposition}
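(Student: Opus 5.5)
The first assertion is a restatement of the midpoint chain in Corollary~\ref{thm:midpoint}. We have
\[
F=\Phi_{1/2}\leq\sqrt{\Phi_\alpha\Phi_{1-\alpha}}\leq\mathcal F_\alpha .
\]
Hence $F=\mathcal F_\alpha$ holds if and only if both inequalities are equalities. The second inequality is an equality exactly when $\Phi_\alpha=\Phi_{1-\alpha}$, by Theorem~\ref{thm:midpoint-equality}(i). The first is an equality exactly when $\Phi_{1/2}^2=\Phi_\alpha\Phi_{1-\alpha}$, after squaring both nonnegative sides. Conversely, if both conditions hold, then the chain collapses and $F=\mathcal F_\alpha$. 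This direction needs no finite-dimensionality or support assumption, so it holds on any separable $H$.

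For the projection class, I would compute $\Phi_t$ directly rather than invoke Theorem~\ref{thm:spectral-equality}. The reason is that $H$ may be infinite-dimensional, although everything lives on the finite-dimensional space $PH+QH$. Since $P$ and $Q$ are projections, $P^t=P$ and $Q^{1-t}=Q$ for $t\in(0,1)$ by functional calculus. Therefore
\[
\rho^t\sigma^{1-t}=r^{-t}r^{-(1-t)}PQ=\tfrac1r PQ ,
\]
and so $\Phi_t(\rho,\sigma)=\tfrac1r\|PQ\|_1$ for every $t$. Taking $t=\tfrac12$ gives $F(\rho,\sigma)=\tfrac1r\|PQ\|_1$. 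Commutativity of $P$ and $Q$ is never used.

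Because the curve $t\mapsto\Phi_t$ is constant, both conditions of the first part hold trivially: $\Phi_\alpha=\Phi_{1-\alpha}$, and $\Phi_{1/2}^2=\Phi_\alpha\Phi_{1-\alpha}$. Hence $F=\mathcal F_\alpha$ for every $\alpha$. When $H$ is finite-dimensional, this is also consistent with Theorem~\ref{thm:spectral-equality}: the only eigenvalues are $r_i=s_j=1/r$, so condition \eqref{eq:spectral-edge-equality} holds automatically.

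There is no real obstacle in this argument. The only points needing care are that the normalizations $\Tr P/r=\Tr Q/r=1$ use equal ranks, and that the identity $P^t=P$ requires $t>0$, which is why $t$ is restricted to the open interval $(0,1)$.
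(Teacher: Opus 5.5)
Your proof is correct and follows the paper's argument essentially verbatim. Equality between the ends of the midpoint chain forces equality in both the log-convexity step and the arithmetic--geometric step, the latter being $\Phi_\alpha=\Phi_{1-\alpha}$. For $\rho=P/r$ and $\sigma=Q/r$, the identities $P^t=P$ and $Q^{1-t}=Q$ give $\rho^t\sigma^{1-t}=r^{-1}PQ$ for all $0<t<1$. Two of your remarks are accurate additions that the paper leaves implicit: the first part needs no finite-dimensionality, and the converse follows by collapsing the chain.
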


\begin{proof}
Corollary~\ref{thm:midpoint} gives
\[
 F\leq\sqrt{\Phi_\alpha\Phi_{1-\alpha}}
 \leq\frac{\Phi_\alpha+\Phi_{1-\alpha}}2.
\]
Equality between the first and last terms forces equality at both steps.
Equality in the scalar arithmetic--geometric mean inequality is equivalent
to $\Phi_\alpha=\Phi_{1-\alpha}$, which proves the first assertion.

For the stated projection states, $P^t=P$ and $Q^t=Q$ for every $t>0$.
Thus
\[
 \rho^t\sigma^{1-t}=r^{-1}PQ,
\]
and the formula follows immediately.
\end{proof}

\begin{example}
In $\mathbb C^3$, let $P$ be the projection onto
$\operatorname{span}\{e_1,e_2\}$ and let $Q$ be the projection onto
\[
 \operatorname{span}\{e_1,
 \cos\vartheta\,e_2+\sin\vartheta\,e_3\},
 \qquad 0<\vartheta<\frac\pi2.
\]
Then $P$ and $Q$ do not commute, but for $\rho=P/2$ and $\sigma=Q/2$,
\[
 \mathcal F_t(\rho,\sigma)=F(\rho,\sigma)
 =\frac{1+\cos\vartheta}{2},\qquad 0<t<1.
\]
Thus midpoint equality does not characterize commutativity or equality of
the states.
\end{example}

\begin{theorem}
\label{thm:constancy-rigidity}
Let $\rho=\sum_i r_iP_i$ and $\sigma=\sum_j s_jQ_j$ be
finite-dimensional states, where only the distinct positive eigenvalues
are displayed.  The following are equivalent:
\begin{enumerate}[label=\textnormal{(\roman*)}]
 \item $t\mapsto\mathcal F_t(\rho,\sigma)$ is constant on a nontrivial
       open subinterval of $(0,1)$;
 \item it is constant on all of $(0,1)$;
 \item either $\rho\sigma=0$, or
 \[
  P_iQ_j\ne0\quad\Longrightarrow\quad r_i=s_j.
 \]
\end{enumerate}
In either case the constant value is $F(\rho,\sigma)$.
\end{theorem}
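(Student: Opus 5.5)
We prove (iii)$\Rightarrow$(ii)$\Rightarrow$(i)$\Rightarrow$(iii).

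\emph{(iii)$\Rightarrow$(ii).} If $\rho\sigma=0$, Proposition~\ref{prop:basic-properties}(f) gives $\mathcal F_t\equiv0=F(\rho,\sigma)$. Otherwise the edge condition is \eqref{eq:spectral-edge-ratio} with $c=1$. The computation in the proof of Theorem~\ref{thm:spectral-equality} then gives, by \eqref{eq:exponential-overlap-curve}, $\Phi_t(\rho,\sigma)=\|s(\rho)\sigma\|_1$ for every $t$. Hence $\mathcal F_t=\tfrac12(\Phi_t+\Phi_{1-t})$ is constant on $(0,1)$. Evaluating at $t=\tfrac12$ identifies the constant as $F(\rho,\sigma)$ by Proposition~\ref{prop:basic-properties}(d).

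\emph{(ii)$\Rightarrow$(i)} is trivial.

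\emph{(i)$\Rightarrow$(iii).} Assume $\mathcal F_t$ is constant on an open interval $J\subset(0,1)$ and that $\rho\sigma\neq0$; the orthogonal case is already covered. We first extend constancy to all of $(0,1)$. As shown in the proof of Corollary~\ref{cor:functional-equality}, nonorthogonality makes $t\mapsto\Phi_t(\rho,\sigma)$ positive and real analytic on $(0,1)$, because the rank of $\rho^t\sigma^{1-t}$ is constant. So $t\mapsto\Phi_{1-t}$ is real analytic too, and so is $\mathcal F_t$. By the identity theorem, $\mathcal F_t$ is constant on all of $(0,1)$. Since $\mathcal F_t=\mathcal F_{1-t}$ and $\mathcal F_{1/2}=F$, the constant is $F(\rho,\sigma)$.

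Next, choose any $\alpha\in(0,1)$ with $\alpha\neq\tfrac12$. Then $F(\rho,\sigma)=\mathcal F_\alpha(\rho,\sigma)$, and the second assertion of Theorem~\ref{thm:spectral-equality} yields \eqref{eq:spectral-edge-equality}, which is (iii). This also shows that the constant value is $F(\rho,\sigma)$ in every case.

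The main obstacle is the analytic continuation step, since constancy is assumed only on a subinterval $J$ that may not contain $\tfrac12$. We handle it with the constant-rank real-analyticity of $\Phi_t$ on $(0,1)$ established in Corollary~\ref{cor:functional-equality}. If one wants to avoid relying on that argument, a fallback is the explicit form
\[
\Phi_t=\Tr\bigl(\sigma^{1-t}\rho^{2t}\sigma^{1-t}\bigr)^{1/2}.
\]
Its nonzero eigenvalues stay bounded away from zero locally, so the square root can be written as a holomorphic functional calculus applied to an analytic family. This again gives real analyticity on $(0,1)$, after which the identity theorem applies exactly as above.
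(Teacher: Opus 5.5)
Your proof is correct, but it reaches (iii) by a different route from the paper's. The paper never extends constancy of $\mathcal F_t$ beyond the given subinterval $J$. Instead it uses that $f(t)=\Phi_t(\rho,\sigma)$ and $f(1-t)$ are both convex, by Theorem~\ref{thm:logconvex}. If their sum is constant on $J$, the two nonnegative convexity gaps vanish separately, so each summand is affine there. Then $(\log f)''=-(f')^2/f^2\ge 0$ forces $f'=0$, so the directed overlap itself is constant on $J$. The paper next reruns the spectral-edge argument of Theorem~\ref{thm:spectral-equality} on a non-symmetric pair $a<b$ in $J$, obtaining $r_i=cs_j$ on every active edge, and constancy on $J$ forces $c=1$.

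You put the constant-rank real analyticity of $\Phi_t$ (from the proof of Corollary~\ref{cor:functional-equality}) first. You continue $\mathcal F_t$ analytically to all of $(0,1)$, then apply the second assertion of Theorem~\ref{thm:spectral-equality} as stated at any $\alpha\neq\tfrac12$.

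\emph{What each route buys.} Yours is more modular: it uses that theorem exactly for the symmetric pair $\{\alpha,1-\alpha\}$ it was stated for, rather than adapting its proof to an arbitrary interval. It also identifies the constant as $F$ immediately. The paper's route extracts slightly more elementary information, namely that each directed overlap is individually constant on $J$, using only convexity. Both arguments still rest on the same analyticity and dominant-term analysis inside Theorem~\ref{thm:spectral-equality}.

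A small point: you do not need $\mathcal F_t=\mathcal F_{1-t}$ to identify the constant, since constancy on $(0,1)$ already includes $t=\tfrac12$.
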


\begin{proof}
Only \textnormal{(i)} implies \textnormal{(iii)} requires proof.  The
orthogonal case is immediate, so suppose that the states are
nonorthogonal and put $f(t)=\Phi_t(\rho,\sigma)>0$.  By
Theorem~\ref{thm:logconvex}, both $f(t)$ and $f(1-t)$ are log-convex and
hence convex.  If
\[
 2\mathcal F_t=f(t)+f(1-t)
\]
is constant on an interval, the two nonnegative convexity gaps in this
sum must vanish separately.  Thus both summands are affine there.
Since $f$ is positive and affine, direct differentiation gives
\[
 (\log f)''=-\frac{(f')^2}{f^2}.
\]
Log-convexity makes the left-hand side nonnegative, and therefore
$f'=0$.  Hence $f$ is constant on a nontrivial interval.

Choose distinct $a,b$ in that interval and put $m=(a+b)/2$.  Then
$\Phi_m^2=\Phi_a\Phi_b$, so equality holds in log-convexity.  The proof
of Theorem~\ref{thm:spectral-equality}, applied to $[a,b]$ instead of the
symmetric pair $\{\alpha,1-\alpha\}$, yields $c>0$ such that
\[
 P_iQ_j\ne0\quad\Longrightarrow\quad r_i=cs_j.
\]
Equation~\eqref{eq:exponential-overlap-curve} gives $f(t)=Kc^t$.
Constancy on the original interval forces $c=1$, proving
\textnormal{(iii)}.  Conversely, \textnormal{(iii)} and
Equation~\eqref{eq:exponential-overlap-curve} with $c=1$ show that both
$\Phi_t$ and $\mathcal F_t$ are constant on $(0,1)$.  Evaluation at
$t=1/2$ identifies their value as $F$.
\end{proof}

\begin{theorem}
\label{thm:upper-rigidity}
For every $0<\alpha<1$, $\Phi_\alpha(\rho,\sigma)=1$ if and only if $\rho=\sigma$,
and hence $\mathcal F_\alpha(\rho,\sigma)=1$ if and only if $\rho=\sigma$.
\end{theorem}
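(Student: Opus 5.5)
The plan is to analyze the equality case in the Hölder chain used in Proposition~\ref{prop:basic-properties}(a). Set $A=\rho^\alpha$, $B=\sigma^{1-\alpha}$, $p=1/\alpha$, $q=1/(1-\alpha)$, so that $\|A\|_p=\|B\|_q=1$. The reverse direction is immediate, since $\Phi_\alpha(\rho,\rho)=\|\rho\|_1=1$. For the forward direction, the first step is to use trace duality to write $1=\|AB\|_1=\Tr(U^*AB)$, where $AB=U|AB|$ is the polar decomposition with partial isometry $U$. This gives $\Tr(U^*AB)=\Tr(BU^*A)$, and the quantity we must control is $|\Tr(BU^*A)|$.

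The second step bounds this trace in two stages. Cauchy--Schwarz in Hilbert--Schmidt form gives
\[
|\Tr(BU^*A)|\le \|A^{1/2}U B^{1/2}\|_2\,\|B^{1/2}U^*A^{1/2}\|_2
\]
after suitable splitting, but it is cleaner to invoke the scalar Young-type bound directly. For positive $A,B$ and a contraction $V$,
\[
|\Tr(AVB)|\le \Tr\bigl(\tfrac{1}{p}A^p+\tfrac1q B^q\bigr).
\]
This follows from the Hölder/Young inequality applied to singular values (via $|\Tr X|\le\sum s_j(X)$ together with Horn's majorization $s(AVB)\prec_w s(A)s(B)$, and then scalar Young). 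Since $\Tr A^p=\Tr\rho=1$ and $\Tr B^q=\Tr\sigma=1$, the right side equals $1$. Equality therefore forces equality in every step: $\sum_j s_j(A)s_j(B)=1$, and then, by scalar Young, $s_j(A)^p=s_j(B)^q$ for each $j$. In other words, $\rho$ and $\sigma$ have the same eigenvalue list.

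The third step, which I expect to be the main obstacle, is to upgrade the spectral equality to $\rho=\sigma$; this requires equality in the trace inequality $|\Tr(AVB)|\le\sum s_j(A)s_j(B)$ (von Neumann). My approach is to avoid the von Neumann equality analysis and instead use a fidelity argument. By log-convexity (Theorem~\ref{thm:logconvex}), $t\mapsto\log\Phi_t\le 0$ is convex, and its limits as $t\downarrow0$ and $t\uparrow1$ are at most $0$; the latter follows from $\Phi_t\le1$. If $\Phi_\alpha=1$, then $\log\Phi$ attains its maximum $0$ at an interior point, and convexity then forces $\log\Phi_t=0$ on all of $(0,1)$. I need to be careful here, because a convex function with an interior maximum is constant only if the maximum is attained in the interior, which is exactly our situation. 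So $\log\Phi_t\equiv0$, and in particular $F(\rho,\sigma)=\Phi_{1/2}=1$.

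The final step is to conclude $\rho=\sigma$ from $F(\rho,\sigma)=1$. The standard way is via purifications (Uhlmann) or the Fuchs--van de Graaf inequality $\tfrac12\|\rho-\sigma\|_1\le\sqrt{1-F^2}=0$; both hold in separable Hilbert spaces. Alternatively, one can apply the Hölder equality case at $p=q=2$ directly: $\Tr(\rho^{1/2}U\sigma^{1/2})$ with $\|\rho^{1/2}\|_2=\|\sigma^{1/2}\|_2=1$ achieves equality in Cauchy--Schwarz, so $\rho^{1/2}=U\sigma^{1/2}$ with $\Tr(U^*\rho^{1/2}\sigma^{1/2})=1$. From this, positivity arguments give $\rho=\sigma$. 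The statement for $\mathcal F_\alpha$ then follows because $\mathcal F_\alpha=1$ with both terms at most $1$ forces $\Phi_\alpha=1$.
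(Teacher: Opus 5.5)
Your argument is correct, but it reaches the conclusion by a different route from the paper.

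The paper settles the forward implication in one step for each fixed $\alpha$. With $a=\rho^\alpha$ and $b=\sigma^{1-\alpha}$, it uses the trace Young inequality $\Tr|ab|\le\frac1p\Tr a^p+\frac1q\Tr b^q=1$. It then invokes the equality theorem for positive compact operators, cited from Farenick--Manjegani: equality holds iff $a^p=b^q$, which is exactly $\rho=\sigma$.

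You instead use Theorem~\ref{thm:logconvex} together with $\Phi_t\le1$ to move the equality $\Phi_\alpha=1$ to the midpoint. The cleanest form is multiplicative. For $t\neq\alpha$, choose $s$ on the other side of $\alpha$. Then $1=\Phi_\alpha\le\Phi_t^{1-\theta}\Phi_s^{\theta}$, and since both factors are at most $1$, this forces $\Phi_t=1$. In this form you need neither logarithms nor the endpoint limits.

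This gives $F(\rho,\sigma)=1$, and your Hilbert--Schmidt Cauchy--Schwarz sketch then closes cleanly. Equality forces $\rho^{1/2}=U\sigma^{1/2}$. Hence $\rho=\sigma^{1/2}U^*U\sigma^{1/2}\le\sigma$, and since the traces are equal, $\rho=\sigma$. Fuchs--van de Graaf would also finish the argument.

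There is no circularity. Theorem~\ref{thm:logconvex} is proved independently; it is Theorem~\ref{thm:midpoint-equality}(ii) that relies on the present result.

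What each route buys:
\begin{itemize}
\item Your version is self-contained within the paper. It replaces a cited, nontrivial equality theorem for the operator Young inequality in infinite dimensions with the interpolation theorem already proved plus the elementary $p=q=2$ equality case.
\item The paper's version is more direct and does not depend on complex interpolation.
\end{itemize}

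Your second step can be deleted. The Horn majorization and scalar Young argument only shows that $\rho$ and $\sigma$ are isospectral, and that conclusion is never used afterwards.
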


\begin{proof}
Put $p=1/\alpha$, $q=1/(1-\alpha)$,
$a=\rho^\alpha$, and $b=\sigma^{1-\alpha}$. The trace version of
Young's inequality gives
\[
 \Tr|ab|\leq \frac1p\Tr(a^p)+\frac1q\Tr(b^q)=1.
\]
Its equality theorem, valid for positive compact operators without a
faithfulness assumption, states that equality holds precisely when
$a^p=b^q$; see \cite{FarenickManjegani,ManjeganiTrace}.
Here this condition is exactly $\rho=\sigma$. Conversely, if
$\rho=\sigma$, then
$\Phi_\alpha(\rho,\rho)=\Tr\rho=1$.

Finally, $\mathcal F_\alpha$ is the average of two numbers bounded above
by one. Its value is one exactly when both numbers equal one, and either
equality already implies $\rho=\sigma$.
\end{proof}

\section{The noncommutative defect}

We have so far studied the directed overlap
$\Phi_\alpha(\rho,\sigma)=\|\rho^\alpha\sigma^{1-\alpha}\|_1$.
A closely related quantity is the Petz overlap
$\Tr(\rho^\alpha\sigma^{1-\alpha})$,
which appears in the definition of the Petz R\'enyi divergence. The
two quantities agree when $\rho$ and $\sigma$ commute. In the
noncommutative case, however, the product
$\rho^\alpha\sigma^{1-\alpha}$ need not be positive, and its trace norm
may be strictly larger than its trace.
To measure this difference, we define
\[
\Delta_\alpha(\rho,\sigma)
:=
\|\rho^\alpha\sigma^{1-\alpha}\|_1
-\Tr(\rho^\alpha\sigma^{1-\alpha}),
\qquad 0<\alpha<1.
\]
This quantity is always nonnegative. Indeed,
\[
\Tr(\rho^\alpha\sigma^{1-\alpha})
=
\Tr\left(
\sigma^{(1-\alpha)/2}
\rho^\alpha
\sigma^{(1-\alpha)/2}
\right)\geq0,
\]
and the trace duality inequality gives
\[
\Tr(\rho^\alpha\sigma^{1-\alpha})
\leq
\|\rho^\alpha\sigma^{1-\alpha}\|_1.
\]
Therefore,
$\Delta_\alpha(\rho,\sigma)\geq0$.
If $\rho$ and $\sigma$ commute, then
$\rho^\alpha\sigma^{1-\alpha}$ is positive. Hence its trace norm is
equal to its trace, and consequently $\Delta_\alpha(\rho,\sigma)=0$.
The main result of this section proves the converse: the defect is zero
only when $\rho$ and $\sigma$ commute.

\begin{theorem}
\label{thm:defect-rigidity}
Let $0<\alpha<1$ and let $\rho,\sigma$ be density operators on a
separable Hilbert space.  Then
\[
 \Tr(\rho^\alpha\sigma^{1-\alpha})
 =\|\rho^\alpha\sigma^{1-\alpha}\|_1
\]
if and only if $\rho$ and $\sigma$ commute. Equivalently,
$ \Delta_\alpha(\rho,\sigma)=0$ if and only if 
$ \rho\sigma=\sigma\rho$.
No faithfulness or common-support hypothesis is required.
\end{theorem}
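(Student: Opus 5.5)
The plan is to reduce the equality $\Tr X=\|X\|_1$ for the trace-class operator $X=\rho^\alpha\sigma^{1-\alpha}$ to the positivity of $X$. Positivity gives self-adjointness, hence commutation of $\rho^\alpha$ and $\sigma^{1-\alpha}$. Commutation of the powers is then lifted to commutation of $\rho$ and $\sigma$ by functional calculus. The direction ``commuting $\Rightarrow\Delta_\alpha=0$'' was already observed before the statement, so only the converse needs proof.

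\emph{Step 1: a trace-class operator with $\Tr X=\|X\|_1$ is positive.} Write the polar decomposition $X=U|X|$ with $U$ a partial isometry, and put $Y=|X|^{1/2}\in\mathcal S_2(H)$. By cyclicity, $\Tr X=\Tr(YUY)$ and $\|X\|_1=\Tr Y^2$. Expanding the Hilbert--Schmidt norm gives
\[
\|UY-Y\|_2^2=\Tr(YU^*UY)-2\operatorname{Re}\Tr(YUY)+\Tr Y^2\le 2\Tr Y^2-2\operatorname{Re}\Tr X .
\]
Under the hypothesis $\Tr X=\|X\|_1=\Tr Y^2$, the right-hand side vanishes. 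Hence $UY=Y$, and so $X=UY\,Y=Y^2=|X|\geq0$. This argument uses only $U^*U\leq I$. It needs neither faithfulness nor any support hypothesis, and it is valid on a separable infinite-dimensional space because all traces involved are of trace-class operators.

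\emph{Step 2: self-adjointness.} From $X\ge0$ we get $X=X^*$, that is,
\[
\rho^\alpha\sigma^{1-\alpha}=\sigma^{1-\alpha}\rho^\alpha .
\]
So the bounded positive operators $A=\rho^\alpha$ and $B=\sigma^{1-\alpha}$ commute.

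\emph{Step 3: lifting to $\rho$ and $\sigma$.} Since $\rho=A^{1/\alpha}$, $\rho$ is a continuous function of $A$ on the compact set $\operatorname{spec}(A)\subseteq[0,1]$. By Weierstrass, choose polynomials $p_n$ with $p_n(0)=0$ converging uniformly to $x\mapsto x^{1/\alpha}$ on $[0,1]$. Then $p_n(A)\to\rho$ in operator norm, and each $p_n(A)$ commutes with $B$, so $\rho$ commutes with $B$. Repeating the argument with $\sigma=B^{1/(1-\alpha)}$ and the roles exchanged shows that $\rho$ commutes with $\sigma$. Equivalently, the commutant of a self-adjoint operator is closed under continuous functional calculus.

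I expect Step 1 to be the only point needing care, specifically justifying the cyclicity $\Tr(U|X|)=\Tr(YUY)$ in infinite dimensions. This holds because $YU$ and $Y$ are both Hilbert--Schmidt, so $\Tr((YU)Y)=\Tr(Y(YU))$. Steps 2 and 3 are routine.
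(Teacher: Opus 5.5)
Your proof is correct and follows the paper's route: $\Tr X=\|X\|_1$ forces $X\geq0$, positivity gives self-adjointness and hence commutation of $\rho^\alpha$ and $\sigma^{1-\alpha}$, and continuous functional calculus lifts this to $\rho\sigma=\sigma\rho$. Your Hilbert--Schmidt computation in Step 1 simply makes explicit the polar-decomposition argument that the paper only cites.
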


\begin{proof}
Put $X=\rho^\alpha\sigma^{1-\alpha}$.  Noncommutative H\H{o}lder
shows that $X$ is trace class.  Moreover,
\[
 \Tr X
 =\Tr\!\left(\sigma^{(1-\alpha)/2}
              \rho^\alpha
              \sigma^{(1-\alpha)/2}\right)\geq0,
\]
where the equality follows first for finite-rank spectral truncations and
then by trace-norm approximation.  For every trace-class operator $T$,
$|\Tr T|\leq\|T\|_1$, and equality with
$\Tr T=\|T\|_1\geq0$ holds precisely when $T$ is positive.  This is the
equality case in trace-norm duality (or follows immediately from the polar
decomposition of $T$).  Consequently, $\Tr X=\|X\|_1$ if and only if $X\geq0$.

The product of two bounded positive operators $A$ and $B$ is positive if
and only if it is self-adjoint, which is equivalent to $AB=BA$.  Applying
this with $A=\rho^\alpha$ and $B=\sigma^{1-\alpha}$ gives
\[
 X\geq0,
 \quad\text{if and only if} \quad
 \rho^\alpha\sigma^{1-\alpha}
 =\sigma^{1-\alpha}\rho^\alpha.
\]
Finally, the functions $t\mapsto t^\alpha$ and
$t\mapsto t^{1-\alpha}$ are continuous and injective on $[0,1]$;
their inverse functions are continuous on their respective spectra.
Functional calculus therefore shows that the last commutation relation is
equivalent to $\rho\sigma=\sigma\rho$.
\end{proof}

\begin{corollary}\label{cor:strict-petz-defect}
For every noncommuting pair of states and every $0<\alpha<1$,
\[
 \Tr(\rho^\alpha\sigma^{1-\alpha})
 <\|\rho^\alpha\sigma^{1-\alpha}\|_1.
\]
Thus $\Delta_\alpha$ detects noncommutativity. We do not claim that it is
a metric or a divergence.
\end{corollary}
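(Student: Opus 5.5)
This is a direct restatement of Theorem~\ref{thm:defect-rigidity}, so the plan is to deduce it from that theorem. It helps to read the corollary as the contrapositive of the ``only if'' direction, combined with the nonnegativity of $\Delta_\alpha$ already established at the start of this section.

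The steps are as follows.

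First, fix a noncommuting pair $\rho,\sigma$ and some $0<\alpha<1$. As computed before Theorem~\ref{thm:defect-rigidity}, we have
\[
\Tr(\rho^\alpha\sigma^{1-\alpha})=\Tr\bigl(\sigma^{(1-\alpha)/2}\rho^\alpha\sigma^{(1-\alpha)/2}\bigr)\ge 0 .
\]
Trace duality then gives $\Tr(\rho^\alpha\sigma^{1-\alpha})\le\|\rho^\alpha\sigma^{1-\alpha}\|_1$, that is, $\Delta_\alpha(\rho,\sigma)\ge0$.

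Second, suppose equality held. Then Theorem~\ref{thm:defect-rigidity} would force $\rho\sigma=\sigma\rho$, contradicting the hypothesis. Hence the inequality is strict, and $\Delta_\alpha(\rho,\sigma)>0$. This holds for every $\alpha\in(0,1)$, because Theorem~\ref{thm:defect-rigidity} holds for each such $\alpha$ with no support assumptions. The phrase ``$\Delta_\alpha$ detects noncommutativity'' then simply combines this strictness with the commuting case, where $\Delta_\alpha=0$. So $\Delta_\alpha(\rho,\sigma)=0$ exactly when $\rho$ and $\sigma$ commute.

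The disclaimer about metrics or divergences is not a mathematical claim and needs no proof. There is no real obstacle here. The only point to check is that Theorem~\ref{thm:defect-rigidity} applies in full generality (separable $H$, possibly nonfaithful states), and it does as stated.
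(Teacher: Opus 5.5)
Your proposal is correct and follows the paper's route. The paper gives no separate proof: the corollary is the contrapositive of the ``only if'' direction of Theorem~\ref{thm:defect-rigidity}, combined with the bound $\Tr(\rho^\alpha\sigma^{1-\alpha})\le\|\rho^\alpha\sigma^{1-\alpha}\|_1$ from the start of the section, which you rightly use to turn ``not equal'' into a strict inequality.
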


Theorem~\ref{thm:defect-rigidity} explains the term noncommutative
defect and distinguishes the trace-norm overlap from the Petz and
Chernoff functionals.

\section{Channels and preserving maps}

A quantum channel is represented by a completely positive
trace-preserving map $\mathcal{E}$. For a quantity that compares two
quantum states, it is natural to ask how its value changes when the same
channel is applied to both states.

The standard fidelity is monotone under quantum channels. Thus, for
every pair of density operators $\rho$ and $\sigma$,
$F(\rho,\sigma)\leq F\bigl(\mathcal{E}(\rho),\mathcal{E}(\sigma)\bigr)$.
Thus a quantum channel cannot decrease the fidelity between two
states. It is natural to ask whether the same property holds for the
symmetrized overlap $\mathcal{F}_\alpha$.

Before studying this question, we recall what is already known for the
directed overlap $\Phi_\alpha$. By
Proposition~\ref{prop:alpha-z},
\[
\Phi_\alpha(\rho,\sigma)
=
Q_{\alpha,1/2}(\rho\Vert\sigma),
\]
so its behavior under quantum channels is governed by the
data-processing theory of the $\alpha$-$z$ R\'enyi quantities. For
$0<\beta<1$, the exact finite-dimensional data-processing region for
$D_{\beta,z}$ is
\[
z\geq\max\{\beta,1-\beta\};
\]
see \cite{HiaiJencova,Zhang}. If $z=\tfrac12$, this condition is
satisfied only when
$\beta=\frac12$.
Consequently, the established $\alpha$-$z$ theory shows that the
directed overlap $\Phi_\alpha$ is monotone under all quantum channels
only at the midpoint $\alpha=\tfrac12$.

This result does not answer the same question for the
symmetrized quantity
\[
\mathcal{F}_\alpha(\rho,\sigma)
=
\frac12\left(
\Phi_\alpha(\rho,\sigma)
+\Phi_{1-\alpha}(\rho,\sigma)
\right).
\]
Neither directed term is monotone for all channels away from the
midpoint. However, their arithmetic mean might still be monotone. In
this section, we show that averaging does not restore monotonicity.

For this purpose, we extend
$\mathcal{F}_\alpha$ from density operators to arbitrary positive
semidefinite matrices. For $A,B\geq0$, define
\[
G_\alpha(A,B)
:=
\frac12\left(
\|A^\alpha B^{1-\alpha}\|_1
+
\|A^{1-\alpha}B^\alpha\|_1
\right).
\]
When $A$ and $B$ are density operators,
\[
G_\alpha(A,B)=\mathcal{F}_\alpha(A,B).
\]
The function $G_\alpha$ is positively homogeneous, since
\[
G_\alpha(cA,cB)=c\,G_\alpha(A,B)
\]
for every $c\geq0$. This extension relates data processing to joint
concavity.

\begin{theorem}
\label{thm:dpi-concavity}
Fix $0<\alpha<1$. The following are equivalent in finite dimensions:
\begin{enumerate}[label=\textup{(\roman*)}]
\item $G_\alpha$ is jointly concave on the equal-trace cone
      \[
       \{(A,B):A,B\geq0,\ \Tr A=\Tr B\};
      \]
\item for every quantum channel $\mathcal E$ and all density matrices
      $\rho,\sigma$,
      \[
       \mathcal F_\alpha(\rho,\sigma)
       \leq\mathcal F_\alpha(\mathcal E(\rho),
       \mathcal E(\sigma)).
      \]
\end{enumerate}
\end{theorem}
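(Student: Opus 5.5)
The proof uses the standard Lieb--Ruskai/Uhlmann correspondence between joint concavity and monotonicity under channels. It rests on three structural properties of $G_\alpha$: positive homogeneity (already noted), additivity on direct sums, and invariance under isometric conjugation and under tensoring with a maximally mixed ancilla. The equal-trace restriction in (i) is exactly what these reductions produce and use.

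\emph{(ii)$\Rightarrow$(i).} Let $(A_k,B_k)$, $k=1,\dots,n$, lie in the equal-trace cone with $\Tr A_k=\Tr B_k=t_k$, and let $\lambda_k\ge0$ with $\sum_k\lambda_k=1$. Put $T=\sum_k\lambda_kt_k$. If $T=0$, all $A_k,B_k$ vanish and there is nothing to prove. Otherwise define the block-diagonal density matrices
\[
\rho=T^{-1}\textstyle\bigoplus_k\lambda_kA_k,\qquad \sigma=T^{-1}\bigoplus_k\lambda_kB_k
\]
on $\mathbb C^n\otimes H$. Both have trace one precisely because $\Tr A_k=\Tr B_k$. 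Powers act blockwise, so $\rho^\alpha\sigma^{1-\alpha}$ is block diagonal, and the trace norm of a block-diagonal operator is the sum of the blockwise trace norms. Using homogeneity this gives
\[
\mathcal F_\alpha(\rho,\sigma)=T^{-1}\textstyle\sum_k\lambda_kG_\alpha(A_k,B_k).
\]
The partial trace over the register $\mathbb C^n$ is a channel. It sends $\rho,\sigma$ to $T^{-1}\sum_k\lambda_kA_k$ and $T^{-1}\sum_k\lambda_kB_k$. Applying (ii) and multiplying by $T$ (homogeneity again) yields
\[
\textstyle\sum_k\lambda_kG_\alpha(A_k,B_k)\le G_\alpha\bigl(\sum_k\lambda_kA_k,\sum_k\lambda_kB_k\bigr),
\]
which is (i).

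\emph{(i)$\Rightarrow$(ii).} Write $\mathcal E(X)=\Tr_E(VXV^*)$ with a Stinespring isometry $V$. The argument has three steps.
\begin{enumerate}[label=\textup{(\arabic*)}]
\item Isometric invariance: $(VAV^*)^\beta=VA^\beta V^*$ for $\beta>0$, so
\[
(VAV^*)^\alpha(VBV^*)^{1-\alpha}=VA^\alpha B^{1-\alpha}V^*,
\]
which has the same singular values. Hence $\mathcal F_\alpha(V\rho V^*,V\sigma V^*)=\mathcal F_\alpha(\rho,\sigma)$.
\item Twirling: with $d=\dim E$ and the $d^2$ Weyl--Heisenberg unitaries $W_k$ on $E$,
\[
\Tr_E(Y)\otimes\tfrac{I_E}{d}=\frac1{d^2}\sum_k(I\otimes W_k)\,Y\,(I\otimes W_k)^*.
\]
Set $\rho'=V\rho V^*$ and $\sigma'=V\sigma V^*$. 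Each pair $\bigl((I\otimes W_k)\rho'(I\otimes W_k)^*,(I\otimes W_k)\sigma'(I\otimes W_k)^*\bigr)$ consists of states, so it lies in the equal-trace cone. By (i) and unitary invariance (Proposition~\ref{prop:basic-properties}(e)),
\[
G_\alpha\bigl(\mathcal E(\rho)\otimes\tfrac{I}{d},\,\mathcal E(\sigma)\otimes\tfrac{I}{d}\bigr)\ge\frac1{d^2}\sum_kG_\alpha(\rho',\sigma')=\mathcal F_\alpha(\rho,\sigma).
\]
\item Ancilla removal: $(X\otimes I/d)^\alpha(Y\otimes I/d)^{1-\alpha}=X^\alpha Y^{1-\alpha}\otimes d^{-1}I$, whose trace norm is $\|X^\alpha Y^{1-\alpha}\|_1$. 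The same holds with $\alpha$ and $1-\alpha$ exchanged, so the left side of step (2) equals $\mathcal F_\alpha(\mathcal E(\rho),\mathcal E(\sigma))$.
\end{enumerate}
Combining the three steps gives (ii).

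I do not expect a deep obstacle; the argument is bookkeeping. The points needing care are:
\begin{itemize}
\item checking that every pair to which concavity is applied really lies in the equal-trace cone, which is why the twirl is used rather than arbitrary convex decompositions;
\item the Weyl-twirl identity for the partial trace;
\item the fact that positive powers, taken as zero on kernels, commute with isometric conjugation and with direct sums, so that the non-faithful case needs no separate treatment.
\end{itemize}
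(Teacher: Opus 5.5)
Your proposal is correct and follows essentially the same route as the paper. For (i)$\Rightarrow$(ii) you use Stinespring dilation, isometric invariance, and a twirl by a unitary error basis followed by removal of the maximally mixed ancilla; for (ii)$\Rightarrow$(i) you use a classical-register block-diagonal construction followed by the partial trace. The only difference is cosmetic: you build the trace normalization $T$ (and $n$ summands) directly into the block states, whereas the paper first proves concavity on density matrices and then extends it to the equal-trace cone by positive homogeneity.
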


\begin{proof}
Assume first that $G_\alpha$ is jointly concave. It is invariant under
simultaneous unitary conjugation and additive under matched direct sums.
Moreover, for the maximally mixed state $\omega_m=I_m/m$,
\[
 G_\alpha(A\otimes\omega_m,B\otimes\omega_m)
 =G_\alpha(A,B),
\]
because $\Phi_t(\omega_m,\omega_m)=1$ and powers and trace norms are
multiplicative under tensor products.

Let $A,B$ act on $H\otimes\mathbb C^m$. Averaging simultaneous conjugation
by a unitary error basis $\{U_j\}_{j=1}^{m^2}$ on the second tensor factor
gives
\[
 \frac1{m^2}\sum_{j=1}^{m^2}
 (I\otimes U_j)A(I\otimes U_j)^*
 =\Tr_{\mathbb C^m}(A)\otimes\frac{I_m}{m},
\]
and the analogous identity for $B$. Joint concavity and unitary invariance
therefore imply
\begin{align*}
 G_\alpha\left(\Tr_{\mathbb C^m}A,
                    \Tr_{\mathbb C^m}B\right)
 &=G_\alpha\left(
   \Tr_{\mathbb C^m}(A)\otimes\frac{I_m}{m},
   \Tr_{\mathbb C^m}(B)\otimes\frac{I_m}{m}\right)\\
 &\geq G_\alpha(A,B).
\end{align*}
Every quantum channel has a Stinespring representation as an isometry
followed by a partial trace. Since $G_\alpha$ is invariant under a common
isometry, (ii) follows.

Conversely, suppose (ii) holds. Let $A_1,A_2,B_1,B_2$ be density matrices
and $0\leq\lambda\leq1$. Introduce a two-dimensional classical register
and the block-diagonal states
\begin{align*}
 \widehat A&=\lambda A_1\otimes|0\rangle\langle0|
 +(1-\lambda)A_2\otimes|1\rangle\langle1|,\\
 \widehat B&=\lambda B_1\otimes|0\rangle\langle0|
 +(1-\lambda)B_2\otimes|1\rangle\langle1|.
\end{align*}
Block additivity and homogeneity give
\[
 G_\alpha(\widehat A,\widehat B)
 =\lambda G_\alpha(A_1,B_1)
 +(1-\lambda)G_\alpha(A_2,B_2).
\]
Applying (ii) to the partial trace over the classical register yields
joint concavity on states. Positive homogeneity then extends it to all
positive pairs with equal trace, which is exactly the form needed for
channel monotonicity.
\end{proof}

Although data processing fails for some mixed inputs, as shown below, it
holds on a noncommutative class of inputs that includes every pair of pure
states.

\begin{proposition}
\label{prop:dpi-projections}
Let $P,Q$ be finite-rank projections of the same rank $r$, and put
$\rho=P/r$ and $\sigma=Q/r$.  For every $0<\alpha<1$ and every quantum
channel $\mathcal E$,
\[
 \mathcal F_\alpha(\rho,\sigma)
 \leq
 \mathcal F_\alpha\bigl(\mathcal E(\rho),\mathcal E(\sigma)\bigr).
\]
In particular, this holds for every pair of pure input states.
\end{proposition}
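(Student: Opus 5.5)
The proof rests on a two-sided sandwich. On the input side, the earlier proposition on equal-rank projection states gives
\[
\mathcal F_\alpha(\rho,\sigma)=F(\rho,\sigma)=\tfrac1r\|PQ\|_1 .
\]
On the output side, Corollary~\ref{thm:midpoint}, applied to the density operators $\mathcal E(\rho)$ and $\mathcal E(\sigma)$, gives
\[
F\bigl(\mathcal E(\rho),\mathcal E(\sigma)\bigr)\leq\mathcal F_\alpha\bigl(\mathcal E(\rho),\mathcal E(\sigma)\bigr).
\]
The missing middle link is the standard data-processing inequality for the root Uhlmann--Jozsa fidelity, $F(\rho,\sigma)\leq F(\mathcal E(\rho),\mathcal E(\sigma))$, which the paper recalls at the start of the channels section. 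Chaining these,
\[
\mathcal F_\alpha(\rho,\sigma)=F(\rho,\sigma)\leq F\bigl(\mathcal E(\rho),\mathcal E(\sigma)\bigr)\leq \mathcal F_\alpha\bigl(\mathcal E(\rho),\mathcal E(\sigma)\bigr).
\]

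The steps, in order, are as follows.
\begin{enumerate}[label=\textup{(\arabic*)}]
\item Record that $P^t=P$ and $Q^t=Q$ for $t>0$. Then $\rho^t\sigma^{1-t}=r^{-1}PQ$, so $\Phi_t(\rho,\sigma)$ is independent of $t$ and $\mathcal F_\alpha(\rho,\sigma)=\Phi_{1/2}(\rho,\sigma)=F(\rho,\sigma)$. This is the proposition already proved.
\item Invoke monotonicity of $F$ under channels. In finite dimensions this follows from Uhlmann's purification theorem: purifications of the inputs pushed through a Stinespring dilation give purifications of the outputs, so the supremum of pure overlaps can only increase.
\item Apply the lower bound $F\leq\mathcal F_\alpha$ of Corollary~\ref{thm:midpoint} to the output pair. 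That corollary holds for arbitrary states, with no support or commutativity assumption.
\end{enumerate}

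For the pure-state clause, take $r=1$: $\rho=|\xi\rangle\langle\xi|$ and $\sigma=|\eta\rangle\langle\eta|$ are rank-one projections. Alternatively, the earlier pure-state proposition gives $\mathcal F_\alpha=|\langle\xi,\eta\rangle|=F$ directly.

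I expect no serious obstacle. The only point needing care is the setting. If $H$ is infinite-dimensional and $\mathcal E$ is a normal channel, fidelity monotonicity is still standard via Stinespring and Uhlmann, and Corollary~\ref{thm:midpoint} was proved on separable $H$, so the argument carries over unchanged. I would state that the channel is assumed normal (or that $H$ is finite-dimensional, matching Theorem~\ref{thm:dpi-concavity}) and cite the fidelity DPI rather than reprove it.

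The conceptual content worth stating is this. The general obstruction to data processing comes from $\mathcal F_\alpha$ exceeding $F$ on inputs. For this class the input value sits exactly at the floor $F$, so the channel cannot push the output below it.
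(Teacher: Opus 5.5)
Your proof is correct and is the same argument the paper gives. You use the equal-rank projection structure to get $\mathcal F_\alpha(\rho,\sigma)=\frac1r\|PQ\|_1=F(\rho,\sigma)$, then monotonicity of root fidelity, and then the lower bound of Corollary~\ref{thm:midpoint} applied to the output pair; your caveat that the channel should be normal when $H$ is infinite-dimensional is a reasonable point that the paper leaves implicit.
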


\begin{proof}
Functional calculus on the supports gives
\[
 \rho^\alpha\sigma^{1-\alpha}=\frac1rPQ,
 \qquad
 \rho^{1-\alpha}\sigma^\alpha=\frac1rPQ.
\]
Hence
\[
 \mathcal F_\alpha(\rho,\sigma)
 =\frac1r\lVert PQ\rVert_1
 =F(\rho,\sigma),
\]
where $F$ denotes root fidelity.  The midpoint bound proved above and
monotonicity of root fidelity under quantum channels now yield
\[
 \mathcal F_\alpha(\rho,\sigma)=F(\rho,\sigma)
 \leq F\bigl(\mathcal E(\rho),\mathcal E(\sigma)\bigr)
 \leq \mathcal F_\alpha
      \bigl(\mathcal E(\rho),\mathcal E(\sigma)\bigr).
\]
\end{proof}

We now determine the values of $\alpha$ for which $G_\alpha$ is jointly
concave on the equal-trace cone. By the preceding theorem, this is
equivalent to asking when
\[
 \cF_\alpha(\rho,\sigma)\leq
 \cF_\alpha(\mathcal E(\rho),\mathcal E(\sigma))
\]
holds for all states and all quantum channels.
This question has a useful one-variable form. On
$H\oplus H$ put
\[
 R=\begin{pmatrix}A&0\\0&B\end{pmatrix},\qquad
 J=\begin{pmatrix}0&I\\I&0\end{pmatrix}.
\]
Then $J=J^*=J^{-1}$ and direct block multiplication gives
\begin{equation}\label{eq:block-swap-reduction}
 R^\alpha J R^{1-\alpha}
 =\begin{pmatrix}
 0&A^\alpha B^{1-\alpha}\\
 B^\alpha A^{1-\alpha}&0
 \end{pmatrix}.
\end{equation}
Since the trace norm of an off-diagonal block matrix
$\left(\begin{smallmatrix}0&X\\Y&0\end{smallmatrix}\right)$ equals
$\lVert X\rVert_1+\lVert Y\rVert_1$, we obtain the exact identity
\begin{equation}\label{eq:G-block-swap}
 2G_\alpha(A,B)=\lVert R^\alpha J R^{1-\alpha}\rVert_1.
\end{equation}
Thus, joint concavity of $G_\alpha$ is equivalent to concavity of the
right-hand side of \eqref{eq:G-block-swap} on the block-diagonal
positive cone. In terms of the $\alpha$--$z$ trace functions, the same
identity is
\[
 2G_\alpha(A,B)=Q_{\alpha,1/2}(R\Vert JRJ).
\]
This is a special case of the two-variable $\alpha$--$z$ problem,
because the second argument is a fixed unitary conjugate of the first.
The general concavity theorem for $Q_{\alpha,z}$ does not decide this
special case away from the midpoint. Indeed, when $z=1/2$, its
concavity range requires
$1/2\geq\max\{\alpha,1-\alpha\}$, and hence $\alpha=1/2$. Therefore,
other values of $\alpha$ must be studied by using the block-swap form
in \eqref{eq:block-swap-reduction}.

For qubits, this problem can be written in scalar form.

\begin{proposition}
\label{prop:qubit-formula}
Let $A,B\in M_2^+$ and $0<\alpha<1$.  Then
\begin{equation}\label{eq:qubit-directed}
 \Phi_\alpha(A,B)^2
 =\Tr\!\left(A^{2\alpha}B^{2(1-\alpha)}\right)
  +2(\det A)^\alpha(\det B)^{1-\alpha}.
\end{equation}
Consequently,
\begin{align}\label{eq:qubit-symmetrized}
 2G_\alpha(A,B)
 &=\left[
 \Tr\!\left(A^{2\alpha}B^{2(1-\alpha)}\right)
 +2(\det A)^\alpha(\det B)^{1-\alpha}
 \right]^{1/2}\notag\\
 &\quad+\left[
 \Tr\!\left(A^{2(1-\alpha)}B^{2\alpha}\right)
 +2(\det A)^{1-\alpha}(\det B)^\alpha
 \right]^{1/2}.
\end{align}
\end{proposition}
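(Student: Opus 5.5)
The plan is to work with $X=A^\alpha B^{1-\alpha}$, a $2\times2$ matrix, and use the elementary fact that for any $2\times 2$ matrix with singular values $s_1,s_2\ge0$,
\[
\|X\|_1^2=(s_1+s_2)^2=s_1^2+s_2^2+2s_1s_2=\Tr(X^*X)+2|\det X|.
\]
So the whole proof reduces to computing these two invariants of $X$ and then adding the two directed terms.

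For the first invariant, cyclicity of the trace gives
\[
\Tr(X^*X)=\Tr\bigl(B^{1-\alpha}A^{2\alpha}B^{1-\alpha}\bigr)=\Tr\bigl(A^{2\alpha}B^{2(1-\alpha)}\bigr).
\]
For the second, multiplicativity of the determinant gives $\det X=\det(A^\alpha)\det(B^{1-\alpha})$. By the functional calculus in an eigenbasis, $\det(A^\alpha)=\lambda_1^\alpha\lambda_2^\alpha=(\det A)^\alpha\ge0$, and similarly $\det(B^{1-\alpha})=(\det B)^{1-\alpha}$. Hence
\[
|\det X|=(\det A)^\alpha(\det B)^{1-\alpha}.
\]
Substituting these into the identity above gives \eqref{eq:qubit-directed}.

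The only point needing care is singular $A$ or $B$. Under the paper's convention, powers are taken to be zero on the kernel, so $A^\alpha$ is still the functional calculus of $t\mapsto t^\alpha$ with $0^\alpha=0$. Its determinant is therefore $0=(\det A)^\alpha$, and the formula holds without change. I expect this to be the main, though minor, obstacle; no limiting argument is needed.

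Finally, apply \eqref{eq:qubit-directed} with $\alpha$ replaced by $1-\alpha$ to obtain $\Phi_{1-\alpha}(A,B)^2$. Taking nonnegative square roots of both directed formulas and adding them gives \eqref{eq:qubit-symmetrized}, since $2G_\alpha(A,B)=\Phi_\alpha(A,B)+\Phi_{1-\alpha}(A,B)$, where $\Phi$ is extended to positive matrices by the same trace-norm formula.
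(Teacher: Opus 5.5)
Your proposal is correct and follows the paper's proof essentially step for step. Both use the $2\times2$ identity $\|X\|_1^2=\Tr(X^*X)+2|\det X|$ with $X=A^\alpha B^{1-\alpha}$, cyclicity of the trace, and multiplicativity of the determinant together with positivity; your explicit check of the singular case under the zero-on-the-kernel convention is a harmless addition that the paper leaves implicit.
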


\begin{proof}
For every $2\times2$ matrix $X$, its two singular values give
\[
 \lVert X\rVert_1^2=\Tr(X^*X)+2|\det X|.
\]
Apply this identity to $X=A^\alpha B^{1-\alpha}$.  Cyclicity of the
trace yields
\[
 \Tr(X^*X)
 =\Tr\!\left(B^{1-\alpha}A^{2\alpha}B^{1-\alpha}\right)
 =\Tr\!\left(A^{2\alpha}B^{2(1-\alpha)}\right),
\]
whereas positivity gives
$|\det X|=(\det A)^\alpha(\det B)^{1-\alpha}$.  This proves
\eqref{eq:qubit-directed}; applying it at $\alpha$ and $1-\alpha$ gives
\eqref{eq:qubit-symmetrized}.
\end{proof}

For density matrices one may make \eqref{eq:qubit-symmetrized} entirely
scalar.  Write
\[
 A=\frac12(I+r\,u\cdot\boldsymbol\sigma),\qquad
 B=\frac12(I+s\,v\cdot\boldsymbol\sigma),
\]
where $0\leq r,s\leq1$ and $u,v$ are unit vectors.  If
\[
 c_p(t)=\frac12\left[\left(\frac{1+t}{2}\right)^p
                         +\left(\frac{1-t}{2}\right)^p\right],\qquad
 d_p(t)=\frac12\left[\left(\frac{1+t}{2}\right)^p
                         -\left(\frac{1-t}{2}\right)^p\right],
\]
then
\begin{equation}\label{eq:qubit-bloch-trace}
 \Tr(A^pB^q)=2\{c_p(r)c_q(s)+d_p(r)d_q(s)\,u\cdot v\},
 \qquad
 \det A=\frac{1-r^2}{4}.
\end{equation}
Equations \eqref{eq:qubit-symmetrized}--\eqref{eq:qubit-bloch-trace}
reduce qubit joint concavity to an explicit inequality in
$r,s,u\cdot v$.  They also show why the conclusion is not immediate from
Lieb concavity: outside the midpoint, one of the exponents $2\alpha$ and
$2(1-\alpha)$ exceeds one.  Thus the two square-root terms must be treated
together if symmetrization restores concavity.

There is also a dimension-free local test at the maximally mixed state.

\begin{proposition}
\label{prop:maximally-mixed-expansion}
Let $X,Y\in M_n$ be traceless Hermitian matrices and, for sufficiently
small real $\varepsilon$, set
\[
 \rho_\varepsilon=\frac{I+\varepsilon X}{n},\qquad
 \sigma_\varepsilon=\frac{I+\varepsilon Y}{n}.
\]
Then, for every $0<\alpha<1$,
\begin{equation}\label{eq:maximally-mixed-expansion}
 \mathcal F_\alpha(\rho_\varepsilon,\sigma_\varepsilon)
 =1-\frac{\alpha(1-\alpha)}{2n}\varepsilon^2
       \Tr (X-Y)^2+O(\varepsilon^3).
\end{equation}
In particular, the joint Hessian of $\mathcal F_\alpha$ at
$(I/n,I/n)$ is negative semidefinite; its null directions are precisely
the common perturbations $X=Y$.
\end{proposition}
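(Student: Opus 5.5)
We will expand $\Phi_t(\rho_\varepsilon,\sigma_\varepsilon)$ to second order in $\varepsilon$ for a general $t\in(0,1)$ and then average at $t=\alpha$ and $t=1-\alpha$.

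Since $I+\varepsilon X$ and $I+\varepsilon Y$ are positive definite for small $\varepsilon$, we have $\rho_\varepsilon^t=n^{-t}(I+\varepsilon X)^t$. The binomial series then gives
\[
(I+\varepsilon X)^t=I+t\varepsilon X+\tfrac{t(t-1)}{2}\varepsilon^2X^2+O(\varepsilon^3),
\]
and similarly for $\sigma_\varepsilon^{1-t}$. Hence $n\,\rho_\varepsilon^t\sigma_\varepsilon^{1-t}=I+\varepsilon Z_1+\varepsilon^2 Z_2+O(\varepsilon^3)$, where
\[
Z_1=tX+(1-t)Y,\qquad Z_2=\tfrac{t(t-1)}{2}X^2-\tfrac{t(1-t)}{2}Y^2+t(1-t)XY .
\]
The product is close to $I$, so it is invertible. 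We therefore compute its trace norm as $\Tr\,(M^*M)^{1/2}$, using the formula of Proposition~\ref{prop:alpha-z}.

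Write $M=I+\varepsilon Z_1+\varepsilon^2Z_2+O(\varepsilon^3)$. Then
\[
M^*M=I+\varepsilon(2Z_1)+\varepsilon^2\bigl(Z_1^2+Z_2+Z_2^*\bigr)+O(\varepsilon^3),
\]
because $Z_1$ is Hermitian. For a Hermitian perturbation $I+\varepsilon E_1+\varepsilon^2E_2$, the expansion $(I+W)^{1/2}=I+\tfrac12W-\tfrac18W^2+\dots$ gives
\[
\Tr(I+W)^{1/2}=n+\tfrac12\Tr W-\tfrac18\Tr W^2+O(\varepsilon^3).
\]
No commutation issue arises here, since everything sits under the trace at this order. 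Plugging in, and using $\Tr X=\Tr Y=0$ so that $\Tr Z_1=0$, the second-order coefficient becomes
\[
\tfrac12\Tr\bigl(Z_1^2+2\operatorname{Re}Z_2\bigr)-\tfrac12\Tr Z_1^2=\Tr\operatorname{Re}Z_2 .
\]
Note that $\Tr(XY)$ is real for Hermitian $X,Y$. This yields
\[
\Phi_t=1+\tfrac{\varepsilon^2}{n}\cdot\tfrac{t(t-1)}{2}\bigl(\Tr X^2+\Tr Y^2-2\Tr XY\bigr)+O(\varepsilon^3)
=1-\tfrac{t(1-t)}{2n}\varepsilon^2\Tr(X-Y)^2+O(\varepsilon^3).
\]
The coefficient is symmetric under $t\leftrightarrow1-t$, so averaging at $t=\alpha$ and $t=1-\alpha$ gives \eqref{eq:maximally-mixed-expansion}. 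The main obstacle is bookkeeping the cross term $XY$ versus $YX$ correctly through $Z_2+Z_2^*$. Taking the trace before expanding avoids any need for a noncommutative square-root expansion.

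For the Hessian statement, the first-order term vanishes, so $(I/n,I/n)$ is a critical point along traceless directions. The quadratic form is
\[
-\tfrac{\alpha(1-\alpha)}{n}\,\Tr(X-Y)^2\le 0 .
\]
Here $\Tr(X-Y)^2=\|X-Y\|_2^2$ vanishes iff $X=Y$. Hence the Hessian is negative semidefinite, with null directions exactly the common perturbations $X=Y$.
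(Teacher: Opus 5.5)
Your proposal is correct and follows essentially the same route as the paper. You expand $(I+\varepsilon X)^t(I+\varepsilon Y)^{1-t}$ to second order, obtain $\|M\|_1=n+\varepsilon\Tr Z_1+\varepsilon^2\Re\Tr Z_2+O(\varepsilon^3)$ from the square-root expansion of $M^*M$, use tracelessness, and average at $t=\alpha$ and $t=1-\alpha$; your explicit computation of $M^*M$ and of the Hessian form $-\frac{\alpha(1-\alpha)}{n}\Tr(X-Y)^2$ just fills in details the paper leaves implicit.
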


\begin{proof}
The power-series expansion gives
\begin{align*}
 (I+\varepsilon X)^\alpha(I+\varepsilon Y)^{1-\alpha}
 &=I+\varepsilon K+\varepsilon^2L+O(\varepsilon^3),\\
 K&=\alpha X+(1-\alpha)Y,\\
 L&=-\frac{\alpha(1-\alpha)}2(X^2+Y^2)
       +\alpha(1-\alpha)XY.
\end{align*}
Here $K$ is Hermitian.  If
$Z_\varepsilon=I+\varepsilon K+\varepsilon^2L+O(\varepsilon^3)$, then
expanding $(Z_\varepsilon^*Z_\varepsilon)^{1/2}$ at the identity yields
\[
 \lVert Z_\varepsilon\rVert_1
 =n+\varepsilon\Tr K+\varepsilon^2\Re\Tr L
   +O(\varepsilon^3).
\]
Since $X$ and $Y$ are traceless,
\[
 \Tr K=0,\qquad
 \Re\Tr L=-\frac{\alpha(1-\alpha)}2\Tr(X-Y)^2.
\]
After inserting the factor $1/n$, this proves the asserted expansion
for $\Phi_\alpha$.  Replacing $\alpha$ by $1-\alpha$ leaves its quadratic
coefficient unchanged, and hence proves \eqref{eq:maximally-mixed-expansion}.
\end{proof}

\begin{corollary}
Let $\mathcal E:M_n\to M_n$ be completely positive, trace preserving,
and unital.  With the notation of Proposition
\ref{prop:maximally-mixed-expansion},
\[
 \mathcal F_\alpha(\mathcal E(\rho_\varepsilon),
                    \mathcal E(\sigma_\varepsilon))
 -\mathcal F_\alpha(\rho_\varepsilon,\sigma_\varepsilon)
 =\frac{\alpha(1-\alpha)}{2n}\varepsilon^2
 \left\{\Tr(X-Y)^2-\Tr\mathcal E(X-Y)^2\right\}
 +O(\varepsilon^3),
\]
and the displayed quadratic coefficient is nonnegative.  Equivalently,
data processing holds to second order at the maximally mixed state.
\end{corollary}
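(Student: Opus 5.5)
Since $\mathcal E$ is unital and trace preserving, it maps $I$ to $I$ and traceless matrices to traceless Hermitian matrices. Hence
\[
 \mathcal E(\rho_\varepsilon)=\frac{I+\varepsilon\,\mathcal E(X)}{n},\qquad
 \mathcal E(\sigma_\varepsilon)=\frac{I+\varepsilon\,\mathcal E(Y)}{n},
\]
with $\mathcal E(X)$ and $\mathcal E(Y)$ traceless and Hermitian, because $\mathcal E$ is Hermiticity preserving. The output pair therefore has exactly the form treated in Proposition~\ref{prop:maximally-mixed-expansion}, with $X,Y$ replaced by $\mathcal E(X),\mathcal E(Y)$. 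Applying \eqref{eq:maximally-mixed-expansion} to both pairs and subtracting gives the displayed formula. Linearity of $\mathcal E$ identifies the output quadratic term as $\Tr\,\mathcal E(X-Y)^2$, where the square is taken after applying $\mathcal E$. The $O(\varepsilon^3)$ remainders combine into a single $O(\varepsilon^3)$, since $\mathcal E$ is a fixed linear map and the remainder in the proposition depends continuously on the bounded data $X,Y$.

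It remains to show that $\Tr\,\mathcal E(Z)^2\le \Tr Z^2$ for Hermitian $Z=X-Y$. I would use the Kadison--Schwarz inequality in its unital form for the adjoint map. Because $\mathcal E$ is trace preserving, the Hilbert--Schmidt adjoint $\mathcal E^\dagger$ is unital and completely positive. Because $\mathcal E$ is unital, $\mathcal E^\dagger$ is also trace preserving. Kadison--Schwarz applied to the unital positive map $\mathcal E$ gives $\mathcal E(Z)^2\le \mathcal E(Z^2)$ for Hermitian $Z$. Taking traces and using trace preservation then yields
\[
 \Tr\,\mathcal E(Z)^2\le \Tr\,\mathcal E(Z^2)=\Tr Z^2 .
\]
So the coefficient $\frac{\alpha(1-\alpha)}{2n}\{\Tr Z^2-\Tr\,\mathcal E(Z)^2\}$ is nonnegative. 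If one wants to avoid Kadison--Schwarz, an alternative is to note that a unital trace-preserving channel is a contraction in Hilbert--Schmidt norm. This follows by interpolating between its $\infty$-norm and $1$-norm contractivity, or from the doubly stochastic structure of the map.

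The only point needing care is uniformity: Proposition~\ref{prop:maximally-mixed-expansion} requires $\varepsilon$ small enough that all four operators are positive definite. For the outputs this is automatic, since $\|\mathcal E(X)\|_\infty\le\|X\|_\infty$ for a unital positive map. No step is a genuine obstacle. The main thing to state clearly is that \emph{second-order} data processing means nonnegativity of this leading coefficient, which is what was shown.
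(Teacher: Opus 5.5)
Your proposal is correct and is the paper's argument: unitality keeps the outputs in the form $(I+\varepsilon\mathcal E(X))/n$, Proposition~\ref{prop:maximally-mixed-expansion} is applied to both pairs and the expansions are subtracted, and then Kadison's inequality $\mathcal E(Z)^2\leq\mathcal E(Z^2)$ for the unital positive map $\mathcal E$, combined with trace preservation, gives $\Tr\mathcal E(Z)^2\leq\Tr Z^2$. Your discussion of the adjoint $\mathcal E^\dagger$ is never used in the argument and can be dropped.
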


\begin{proof}
Unitality fixes $I/n$.  By the Kadison--Schwarz inequality and trace
preservation,
\[
 \Tr\mathcal E(H)^2\leq\Tr\mathcal E(H^2)=\Tr H^2
 \]
for every Hermitian $H$.  Apply
\eqref{eq:maximally-mixed-expansion} with $H=X-Y$.
\end{proof}

The local conclusion cannot be extended globally.  In fact, the midpoint
is the unique universally monotone parameter.

\begin{theorem}
\label{thm:dpi-counterexample}
For $0<\alpha<1$, the following are equivalent:
\begin{enumerate}[label=\textnormal{(\roman*)}]
 \item $\alpha=1/2$;
 \item $G_\alpha$ is jointly concave on density matrices of every size;
 \item for all density matrices and all quantum channels $\mathcal E$,
 \[
  \mathcal F_\alpha(\rho,\sigma)
  \leq\mathcal F_\alpha(\mathcal E(\rho),\mathcal E(\sigma)).
 \]
\end{enumerate}
If $\alpha\ne1/2$, failure in \textnormal{(ii)} and \textnormal{(iii)}
already occurs for diagonal pinching of real qubit states.  The two input
states may be chosen faithful.
\end{theorem}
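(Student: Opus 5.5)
The plan is to dispose of the easy implications first and concentrate on one explicit qubit counterexample. For (i)$\Rightarrow$(iii), note that at $\alpha=\tfrac12$ we have $\mathcal F_{1/2}=F$ by Proposition~\ref{prop:basic-properties}(d). Monotonicity of root fidelity under channels then gives (iii) directly. Next, (iii)$\Leftrightarrow$(ii) is Theorem~\ref{thm:dpi-concavity}. The direction (iii)$\Rightarrow$(ii) needs only the block-register construction in its proof, which uses matched direct sums of density matrices of any size. So it remains to show that for $\alpha\ne\tfrac12$, (iii) fails already for the diagonal pinching $\mathcal D(X)=\operatorname{diag}(X_{11},X_{22})$ on $M_2$, with faithful real inputs. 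Failure of (ii) then follows from the same theorem. One can also see it directly: $\mathcal D$ is the average of conjugation by $I$ and by $\operatorname{diag}(1,-1)$, so joint concavity together with unitary invariance would force monotonicity under $\mathcal D$.

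For the counterexample I would use a fixed-ratio ansatz. By Theorem~\ref{thm:spectral-equality} and Theorem~\ref{thm:upper-rigidity}, the natural violating inputs are noncommuting pairs whose directed overlap is exponential in $t$, namely $\Phi_t=c^tK$. Then $\mathcal F_\alpha=K(c^\alpha+c^{1-\alpha})/2$ exceeds $F=Kc^{1/2}$ by a factor
\[
\frac{c^{\alpha-1/2}+c^{1/2-\alpha}}{2}>1 .
\]
The pinching destroys the spectral-edge structure and may erase this gain. For faithful qubits, however, every $P_iQ_j$ is nonzero, so exact proportionality of the spectra is impossible unless $\rho=c\sigma$. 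I would therefore work perturbatively near such a configuration, or simply compute with the scalar formulas \eqref{eq:qubit-symmetrized}--\eqref{eq:qubit-bloch-trace}.

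The concrete choice is as follows. Take real Bloch vectors $u=(\sin\phi,0,\cos\phi)$ and $v=(-\sin\phi,0,\cos\phi)$ with a common radius $r=s<1$, which makes the states faithful. Their pinchings are diagonal with Bloch vectors $(0,0,r\cos\phi)$. Hence the outputs are equal, and $\mathcal F_\alpha(\mathcal D\rho,\mathcal D\sigma)=1$. This choice is useless, since anything is at most $1$. So I would instead choose the inputs so that the pinched outputs are \emph{distinct} commuting states. For example, take $u=(\sin\phi,0,\cos\phi)$ with radius $r$, and $v=(0,0,1)$ with radius $s\ne r\cos\phi$. The output value is then the classical expression $\tfrac12\sum_j(p_j^\alpha q_j^{1-\alpha}+p_j^{1-\alpha}q_j^\alpha)$. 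The input value comes from \eqref{eq:qubit-symmetrized}.

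The key step is to choose parameters and expand to show that the input value strictly exceeds the output value. I would first try the regime $r,s\to1$ with different rates, or an expansion in small $\phi$ around a commuting pair with $p\ne q$. Near commuting pairs, the defect between the two sides should be governed by the second-order term in $\phi$. For $\alpha\ne\tfrac12$, I expect the exponent $2\max\{\alpha,1-\alpha\}>1$ in \eqref{eq:qubit-directed} to make the $\phi^2$ coefficient of the input exceed that of the pinched output. This step is the main obstacle: identifying an explicit family where the sign is provably correct for \emph{every} $\alpha\ne\tfrac12$, rather than just numerically. As a fallback, I would fix $\alpha$-dependent parameters, for instance letting $r\to1$ so that $\det A\to0$ and one root term behaves like a pure-state overlap. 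In that limit I would verify the strict inequality by a one-variable calculus argument, and then restore faithfulness by continuity, since strict inequality persists under small perturbation.
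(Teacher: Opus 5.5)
\textbf{Verdict: genuine gap.} The necessity direction, which is the actual content of the theorem, is never proved.

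Your easy implications are fine. (i)$\Rightarrow$(iii) follows from $\mathcal F_{1/2}=F$, and (ii)$\Leftrightarrow$(iii) from Theorem~\ref{thm:dpi-concavity}. Your remark that diagonal pinching $\mathcal P$ is the average of conjugation by $I$ and $\operatorname{diag}(1,-1)$ is also correct, so joint concavity on qubits would already force monotonicity under $\mathcal P$.

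The problem is that you never exhibit, for a given $\alpha\ne\tfrac12$, a pair with $\mathcal F_\alpha(\rho,\sigma)>\mathcal F_\alpha(\mathcal P\rho,\mathcal P\sigma)$, and you leave the sign computation open. That step is not routine, because the two directed terms move in opposite directions under pinching. If the pinching-invariant state is a pure $P=\operatorname{diag}(0,1)$, then $\Phi_\alpha(P,B)=\sqrt{(B^{2(1-\alpha)})_{22}}$ and $\Phi_{1-\alpha}(P,B)=\sqrt{(B^{2\alpha})_{22}}$. For $\alpha<\tfrac12$, operator convexity of $t^{2(1-\alpha)}$ makes pinching lower the first term, while operator concavity of $t^{2\alpha}$ makes it raise the second. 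Your expectation that the exponent $2\max\{\alpha,1-\alpha\}>1$ wins is therefore exactly what has to be proved.

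Your fallback also points the wrong way. Letting $r\to1$ makes the \emph{non-diagonal} state pure, and near commuting configurations (where you planned to expand) pinching then helps. If $\rho=|\psi\rangle\langle\psi|$ with $|\psi_2|^2=\varepsilon$ and $\sigma=\operatorname{diag}(q_1,q_2)$ is faithful, then $\mathcal F_\alpha(\mathcal P\rho,\sigma)-\mathcal F_\alpha(\rho,\sigma)=\tfrac12(q_2^{1-\alpha}\varepsilon^\alpha+q_2^\alpha\varepsilon^{1-\alpha})+O(\varepsilon)$, which is positive for small $\varepsilon$.

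The missing idea, used in the paper, is a regime in which the convex term provably dominates. Take $0<\alpha<\tfrac12$, which suffices since $G_\alpha=G_{1-\alpha}$. Put $B_y=\left(\begin{smallmatrix}b&y\\y&d\end{smallmatrix}\right)$ with $b=1-d$, so that $\mathcal P(P)=P$ and $\mathcal P(B_y)=B_0$.
\begin{itemize}
\item The second divided-difference expansion is $(B_y^p)_{22}=d^p+k_p(b,d)y^2+O(y^4)$, where $k_p=(h_p-pd^{p-1})/(b-d)$ and $h_p=(b^p-d^p)/(b-d)$.
\item It gives $\mathcal F_\alpha(P,B_y)=\mathcal F_\alpha(P,B_0)+C_\alpha(d)y^2+O(y^4)$, with $C_\alpha(d)=\tfrac14\{k_{2(1-\alpha)}/d^{1-\alpha}+k_{2\alpha}/d^{\alpha}\}$.
\item As $d\downarrow0$, the two pieces behave like $d^{-(1-\alpha)}$ and $-2\alpha\, d^{-(1-\alpha)}$. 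Hence $C_\alpha(d)\sim\tfrac{1-2\alpha}4d^{-(1-\alpha)}>0$.
\end{itemize}
So the pure state sits on the small diagonal entry of a nearly pure, but still mixed, partner. You fix such a $d>0$ first and only then take $y\ne0$ small. Because the inequality is strict, you can replace $P$ by $\operatorname{diag}(\varepsilon,1-\varepsilon)$ by continuity, which yields faithful real inputs. In your Bloch coordinates this regime is $s=1$, with $r<1$ fixed close to $1$ and $\phi\to\pi$, rather than the limit $r\to1$.
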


\begin{proof}
The equivalence of \textnormal{(ii)} and \textnormal{(iii)} is
Theorem~\ref{thm:dpi-concavity}.  At $\alpha=1/2$ the functional is root
fidelity, so its joint concavity and data processing are standard.

It remains to prove necessity.  Since $G_\alpha=G_{1-\alpha}$, it is enough
to consider $0<\alpha<1/2$.  Put $d=1-b$, where $1/2<b<1$, and define
\[
 P=\begin{pmatrix}0&0\\0&1\end{pmatrix},\qquad
 B_y=\begin{pmatrix}b&y\\y&d\end{pmatrix},
 \qquad |y|<\sqrt{bd}.
\]
Let $\mathcal P$ denote diagonal pinching, so that
$\mathcal P(P)=P$ and $\mathcal P(B_y)=B_0$.  For $p>0$ set
\[
 h_p(b,d)=\frac{b^p-d^p}{b-d},\qquad
 k_p(b,d)=\frac{h_p(b,d)-p d^{p-1}}{b-d}.
\]
The second-order divided-difference formula for matrix powers gives
\begin{equation}\label{eq:power-22-expansion}
 (B_y^p)_{22}=d^p+k_p(b,d)y^2+O(y^4).
\end{equation}
Indeed, the two eigenvalues are
$b+y^2/(b-d)+O(y^4)$ and $d-y^2/(b-d)+O(y^4)$; expanding the corresponding
spectral projections gives precisely the coefficient $k_p(b,d)$ above.
Because $P^t=P$ for every $t>0$, the rank-one formula yields
\[
 \Phi_\alpha(P,B_y)
 =\sqrt{(B_y^{2(1-\alpha)})_{22}},\qquad
 \Phi_{1-\alpha}(P,B_y)
 =\sqrt{(B_y^{2\alpha})_{22}}.
\]
Using \eqref{eq:power-22-expansion}, we therefore obtain
\begin{equation}\label{eq:pinching-transverse-expansion}
 \mathcal F_\alpha(P,B_y)
 =\mathcal F_\alpha(P,B_0)+C_\alpha(d)y^2+O(y^4),
\end{equation}
where, with $b=1-d$,
\[
 C_\alpha(d)=\frac14\left\lbrace
 \frac{k_{2(1-\alpha)}(b,d)}{d^{1-\alpha}}
 +\frac{k_{2\alpha}(b,d)}{d^\alpha}
 \right\rbrace.
\]
As $d\downarrow0$, one has $b-d\to1$ and
$h_p(b,d)\to1$ for every $p>0$.  Hence
\begin{align*}
 \frac{k_{2(1-\alpha)}(b,d)}{d^{1-\alpha}}
 &=d^{-(1-\alpha)}(1+o(1)),\\
 \frac{k_{2\alpha}(b,d)}{d^\alpha}
 &=-2\alpha d^{-(1-\alpha)}(1+o(1)),
\end{align*}
and consequently
\begin{equation}\label{eq:positive-transverse-coefficient}
 C_\alpha(d)
 =\frac{1-2\alpha}{4}d^{-(1-\alpha)}(1+o(1))>0
\end{equation}
for all sufficiently small $d>0$.  Fix such a $d$.  Equation
\eqref{eq:pinching-transverse-expansion} then gives, for every sufficiently
small nonzero $y$,
\[
 \mathcal F_\alpha(P,B_y)
 >\mathcal F_\alpha(P,B_0)
 =\mathcal F_\alpha(\mathcal P(P),\mathcal P(B_y)).
\]
Thus diagonal pinching violates data processing.  Finally replace $P$ by
$P_\varepsilon=\operatorname{diag}(\varepsilon,1-\varepsilon)$.  The
inequality is strict and matrix powers and the trace norm are continuous
on the positive cone, so it persists for all sufficiently small
$\varepsilon>0$.  Both $P_\varepsilon$ and $B_y$ are then faithful real
qubit states.
\end{proof}

Here is a numerical example with faithful states. Let $\alpha=0.3$ and
\[
 \rho=\begin{pmatrix}
 0.196349286&0.336392366\\
 0.336392366&0.803650714
 \end{pmatrix},
 \qquad
 \sigma=\begin{pmatrix}
 0.999998445&0.000136430\\
 0.000136430&0.000001555
 \end{pmatrix}.
\]
Indeed,
\begin{align*}
 \mathcal F_{0.3}(\rho,\sigma)
 &=0.477142665205840\ldots,\\
 \mathcal F_{0.3}(\mathcal P(\rho),\mathcal P(\sigma))
 &=0.474608669456572\ldots,
\end{align*}
so the decrease under pinching is
$0.002533995749268\ldots$.  The smallest eigenvalues of $\rho$ and
$\sigma$ are approximately $0.04682941$ and $1.5364\times10^{-6}$,
respectively, so both states are faithful. Thus symmetrization does not
restore data processing outside the midpoint. On the other hand, data
processing holds for the special input class in
Proposition~\ref{prop:dpi-projections} and to second order at the
maximally mixed state.

\section{Concluding remarks and further directions}

We end with two possible directions for further study.

The first concerns maps that preserve the symmetrized overlap. Let
$\mathcal{E}$ be a surjective positive trace-preserving map satisfying
\[
\mathcal{F}_\alpha
\bigl(\mathcal{E}(\rho),\mathcal{E}(\sigma)\bigr)
=
\mathcal{F}_\alpha(\rho,\sigma)
\]
for all states $\rho$ and $\sigma$. Since
\[
\mathcal{F}_\alpha(\rho,\sigma)=0
\quad\Longleftrightarrow\quad
\rho\sigma=0,
\]
such a map preserves orthogonality in both directions. This suggests
using known results on orthogonality-preserving maps. In particular,
the methods developed by
Farenick, Jaques, and Rahaman~\cite{Farenick} for fidelity-preserving
maps may be useful here. One may ask for the form of these maps and the
weakest assumptions that give such a description.

The second direction concerns semifinite von Neumann algebras. Let
$(M,\tau)$ be a semifinite von Neumann algebra with a faithful normal
semifinite trace, and define
\[
\mathcal{S}_\tau
=
\{\rho\in L^1(M,\tau)_+:\tau(\rho)=1\}.
\]
For $\rho,\sigma\in\mathcal{S}_\tau$ and $0<\alpha<1$, we may consider
\[
\Phi_\alpha^\tau(\rho,\sigma)
=
\tau\left(
\left|\rho^\alpha\sigma^{1-\alpha}\right|
\right)
\]
and its symmetrized form
\[
\mathcal{F}_\alpha^\tau(\rho,\sigma)
=
\frac12\left(
\Phi_\alpha^\tau(\rho,\sigma)
+
\Phi_{1-\alpha}^\tau(\rho,\sigma)
\right).
\]
The noncommutative H\"older inequality gives
\[
0\leq\Phi_\alpha^\tau(\rho,\sigma)\leq1
\]
and hence
\[
0\leq\mathcal{F}_\alpha^\tau(\rho,\sigma)\leq1.
\]

The directed quantity $\Phi_\alpha^\tau$ is already part of the
von Neumann algebraic theory of the $\alpha$-$z$ R\'enyi divergences
developed through Haagerup $L^p$-spaces; see
\cite{HiaiJencova}. Thus, defining $\Phi_\alpha^\tau$ alone does not
give a new extension. The remaining questions concern the symmetrized
quantity. For example, one may ask whether our equality conditions
remain valid, how to state them for continuous spectra, and which
normal positive maps preserve $\mathcal{F}_\alpha^\tau$.

The infinite-trace case also requires some care. If $\tau(I)=\infty$,
the usual regularization
$x+\varepsilon I$
need not belong to $L^1(M,\tau)$ and cannot be used directly. Instead,
one must work on the supports of the operators or use finite-trace
corners. We leave these questions for future work.

\section{Conclusion}

In this paper, we studied a family of trace-norm overlaps between
quantum states. The midpoint
$\alpha=\tfrac12$ has a special role in this family: it gives the
standard root fidelity and is the only parameter for which the
symmetrized overlap is monotone under all quantum channels.

We proved the log-convexity of the directed overlap and used it to
obtain sharp bounds in terms of the standard fidelity. We also
described the equality cases in finite dimensions and determined
exactly when the overlap remains constant as the parameter varies.
These results show that equality may occur not only for commuting
states, but also for certain genuinely noncommuting pairs.

We further compared the directed trace-norm overlap with the Petz
overlap. The difference between these two quantities is always
nonnegative, and it vanishes exactly when the two states commute.
Finally, we showed that universal data processing for the symmetrized
overlap fails at every parameter other than $\alpha=\tfrac12$. This
failure already occurs for faithful real qubit states under a diagonal
pinching channel.

These results explain the role of the midpoint, describe the equality
cases, and show how noncommutativity affects this family of overlaps.
They also connect trace-norm interpolation with quantum fidelity and
the $\alpha$-$z$ R\'enyi theory.

\end{document}